\pdfoutput=1

\documentclass[a4paper,entropy,article,submit,oneauthor]{mdpiTest5}
\usepackage{amsfonts}
\usepackage{amssymb}
\usepackage{amsmath}
\usepackage{graphicx}
\usepackage{amsthm}

\setcounter{page}{1}
\newtheorem{theorem}{Theorem}

\newtheorem{corollary}[theorem]{Corollary}

\newtheorem{definition}[theorem]{Definition}
\newtheorem{example}[theorem]{Example}

\newtheorem{lemma}[theorem]{Lemma}

\newtheorem{proposition}[theorem]{Proposition}

\begin{document}

\title{Maximum Entropy on Compact Groups}
\author{Peter Harremo\"{e}s}
\maketitle

\address{Centrum Wiskunde \& Informatica, Science Park 123, 1098 GB Amsterdam, Noord-Holland, The Netherlands \\
E-mail: P.Harremoes@cwi.nl
}

\abstract{On a compact group the Haar probability measure plays the
role of uniform distribution. The entropy and rate distortion theory for
this uniform distribution is studied. New results and simplified proofs on
convergence of convolutions on compact groups are presented and they can be
formulated as entropy increases to its maximum. Information theoretic
techniques and Markov chains play a crucial role. The convergence results
are also formulated via rate distortion functions. The rate of convergence
is shown to be exponential.}


\MSC{94A34,60B15}

\keyword{Compact group; Convolution; Haar measure; Information divergence;
Maximum entropy; Rate distortion function; Rate of convergence; Symmetry.}

\section{Introduction}

It is a well-known and celebrated result that the uniform distribution on a
finite set can be characterized as having maximal entropy. Jaynes used this
idea as a foundation of statistical mechanics \cite{Jaynes57}, and the
Maximum Entropy Principle has become a popular principle for statistical
inference \cite{Topsoe93,Jaynes,Kapur,GruDawid03,Topsoe79,HarTop01,Jaynes03}%
. Often it is used as a method to get prior distributions. On a finite set,
for any distributions $P$ we have $H(P)=H(U)-D(P\Vert U)$ where $H$ is the
Shannon entropy, $D$ is information divergence, and $U$ is the uniform
distribution. Thus, maximizing $H(P)$ is equivalent to minimizing $D(P\Vert
U)$. Minimization of information divergence can be justified by the
conditional limit theorem by Csisz\'{a}r \cite[Theorem 4]{Csiszar84}. So if
we have a good reason to use the uniform distribution as prior distribution
we automatically get a justification of the Maximum Entropy Principle. The
conditional limit theorem cannot justify the use of the uniform distribution
itself, so we need something else. Here we shall focus on symmetry.

\begin{example}
A die has six sides that can be permuted via rotations of the die. We note
that not all permutations can be realized as rotations and not all rotations
will give permutations. Let $G$ be the group of permutations that can be
realized as rotations. We shall consider $G$ as the symmetry group of the
die and observe that the uniform distribution on the six sides is the only
distribution that is invariant under the action of the symmetry group $G.$
\end{example}

\begin{example}
$G=\mathbb{R}/2\pi \mathbb{Z}$ is a commutative group that can be identified
with the group $SO\left( 2\right) $ of rotations in 2 dimensions. This is
the simplest example of a group that is compact but not finite.
\end{example}

For an object with symmetries the symmetry group defines a group action on
the object, and any group action on an object defines a symmetry group of
the object. A special case of a group action of the group $G$ is left
translation of the elements in $G$. Instead of studying distributions on
objects with symmetries, in this paper we shall focus on distributions on
the symmetry groups themselves. It is no serious restriction because a
distribution on the symmetry group of an object will induce a distribution
on the object itself.

Convergence of convolutions of probability measures were studied by
Stromberg \cite{Stromberg60} who proved weak convergence of convolutions of
probability measures. An information theoretic approach was introduced by
Csisz\'{a}r \cite{Csis64}. Classical methods involving characteristic
functions have been used to give conditions for uniform convergence of the
densities of convolutions \cite{Schlosman80}. See \cite{Johnson04} for a
review of the subject and further references.

Finally it is shown that convergence in information divergence corresponds
to uniform convergence of the rate distortion function and that weak
convergence corresponds to pointwise convergence of the rate distortion
function. In this paper we shall mainly consider convolutions as Markov
chains. This will give us a tool, which allows us to prove convergence of
iid. convolutions, and the rate of convergence is proved to be exponential.

The rest of the paper is organized as follows. In Section \ref{SecDistortion}
we establish a number of simple results on distortion functions on compact
set. These results will be used in Section \ref{SecRateDist}. In Section \ref%
{SecHaar} we define the uniform distribution on a compact group as the
uniquely determined Haar probability measures. In Section \ref{SecRateDist}
it is shown that the uniform distribution is the maximum entropy
distribution on a compact group in the sense that it maximizes the rate
distortion function at any positive distortion level. Convergence of
convolutions of a distribution to the uniform distribution is established in
Section \ref{SecConvergence} using Markov chain techniques, and the rate of
convergence is discussed in Section \ref{SecRateConv}. The group $SO\left(
2\right) $ is used as our running example. We finish with a short discussion.

\section{Distortion on compact groups\label{SecDistortion}}

Let $G$ be a compact group where $\ast $ denotes the composition. The
neutral element will be denoted $e$ and the inverse of the element $g$ will
be denoted $g^{-1}$.

We shall start with some general comments on distortion functions on compact
sets. Assume that the group both plays the role as source alphabet and
reproduction alphabet. A \emph{distortion function} $d:G\times G\rightarrow 
\mathbb{R}$ is given and we will assume that $d\left( x,y\right) \geq 0$
with equality if and only if $x=y.$ We will also assume that the distortion
function is continuous.

\begin{example}
As distortion function on $SO\left( 2\right) $ we use the squared Euclidean
distance between the corresponding points on the unit circle, i.e.%
\begin{eqnarray*}
d\left( x,y\right)  &=&4\sin ^{2}\left( \frac{x-y}{2}\right)  \\
&=&2-2\cos \left( x-y\right) .
\end{eqnarray*}%
This illustrated in Figure \ref{vinkler}.%
\begin{figure}[ptb]\begin{center}
\includegraphics[
natheight=16.9892in, natwidth=18.0002in, height=2.1162in, width=2.2416in]
{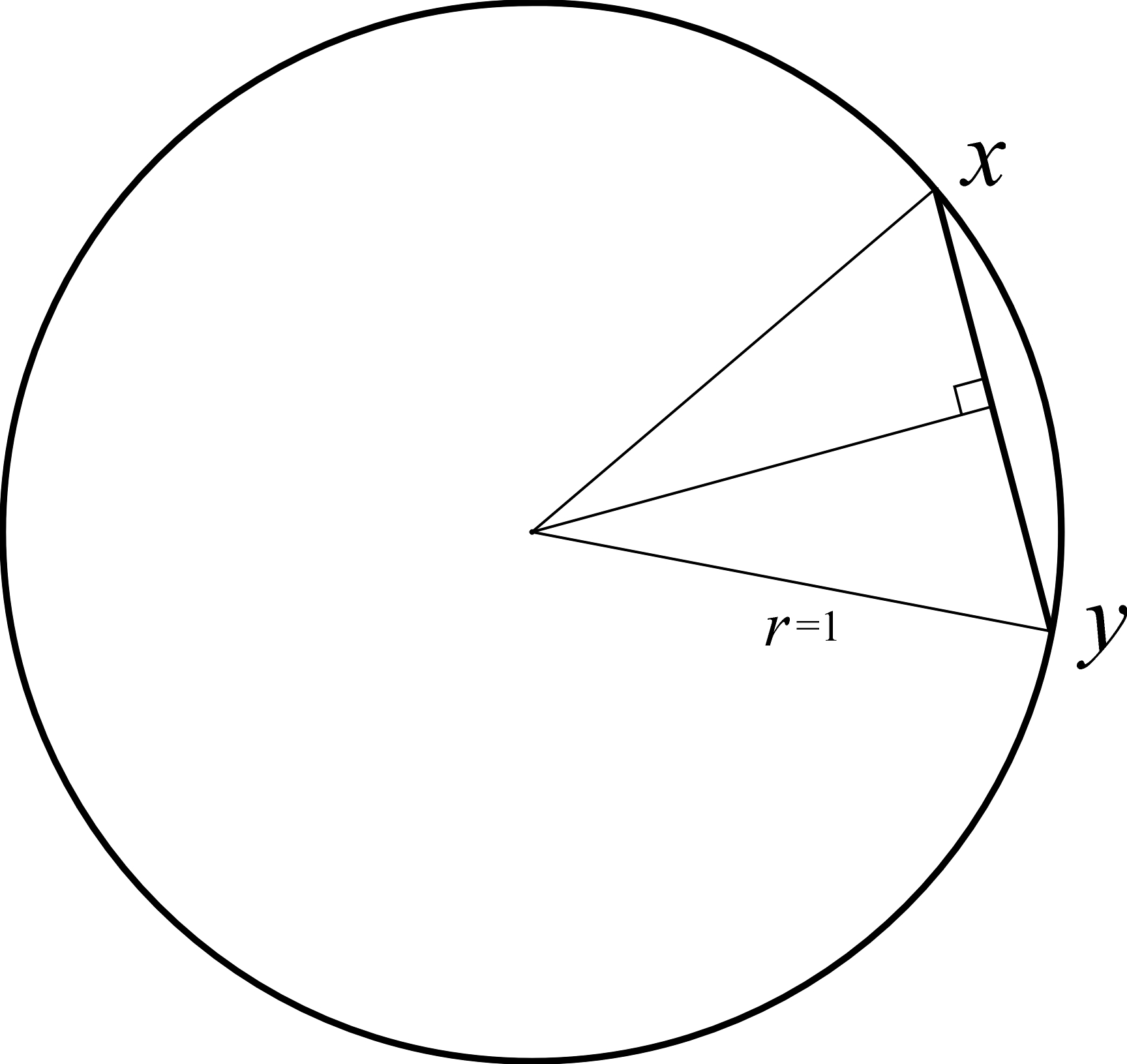}\caption{Squared Euclidean
distance between the rotation angles $x$ and $y.$}\label{vinkler}%
\end{center}\end{figure}%
\end{example}

The distortion function might be a metric but even if the distortion
function is not a metric, the relation between the distortion function and
the topology is the same as if it was a metric. One way of constructing a
distortion function on a group is to use the squared Hilbert-Smidt norm in a
unitary representation of the group.

\begin{theorem}
If $C$ is a compact set and $d:C\times C\rightarrow \mathbb{R}$ is a
non-negative continuous distortion function such that $d\left( x,y\right) =0$
if and only if $x=y,$ then the topology on $C$ is generated by the
distortion balls $\left\{ {x\in C\mid d\left( x,y\right) <r}\right\} $ where 
$y\in C$ and $r>0.$
\end{theorem}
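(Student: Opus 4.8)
The plan is to show that the given topology $\tau$ on $C$ coincides with the topology $\tau_d$ generated by the distortion balls $B_r(y)=\{x\in C\mid d(x,y)<r\}$. One inclusion is immediate. Since $d$ is continuous on $C\times C$, each partial map $x\mapsto d(x,y)$ is continuous, so $B_r(y)$ is the preimage of $(-\infty,r)$ under a continuous map and hence lies in $\tau$. Thus $\tau$ contains every distortion ball, and therefore contains the smallest topology they generate: $\tau_d\subseteq\tau$.

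For the reverse inclusion it suffices to show that every $\tau$-open set $U$ is a union of distortion balls, that is, that for each $x\in U$ there is some $r>0$ with $B_r(x)\subseteq U$; then $U=\bigcup_{x\in U}B_{r_x}(x)\in\tau_d$. I would argue by contradiction: suppose for some $\tau$-open $U$ and some $x\in U$ no such $r$ exists. Then $B_r(x)\cap(C\setminus U)\neq\emptyset$ for every $r>0$, so, choosing a point $x_r$ in each such intersection, one obtains a net $(x_r)$ (indexed by the radii $r>0$ with the order reversed, so that the net runs as $r\to 0$) lying entirely in the set $C\setminus U$, which is closed and therefore compact.

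By compactness of $C\setminus U$ this net has a subnet $(x_{r_\alpha})$ converging in $\tau$ to some point $x_\infty\in C\setminus U$. Along the subnet $r_\alpha\to 0$, and since $d(x_{r_\alpha},x)<r_\alpha$ we get $d(x_{r_\alpha},x)\to 0$; on the other hand the continuity of $z\mapsto d(z,x)$ gives $d(x_{r_\alpha},x)\to d(x_\infty,x)$. Hence $d(x_\infty,x)=0$, so $x_\infty=x$ by the defining property of $d$, contradicting $x_\infty\in C\setminus U$ and $x\in U$. This establishes $\tau\subseteq\tau_d$, and together with the first step we conclude $\tau=\tau_d$.

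The only delicate point is that the hypothesis is compactness rather than metrizability or sequential compactness, so the limiting step must be carried out with nets and subnets (or equivalently with ultrafilters); a sequence-based version would only cover the first-countable case. It is worth stressing that the argument uses neither symmetry of $d$ nor any triangle inequality, as it should, since $d$ need not be a metric. One could alternatively package the conclusion by saying that the identity map $(C,\tau)\to(C,\tau_d)$ is a continuous bijection out of a compact space, but upgrading that to a homeomorphism still requires knowing that $\tau_d$ is Hausdorff, which the direct net argument above neatly sidesteps.
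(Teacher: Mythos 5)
Your proof is correct and rests on the same idea as the paper's: the complement of an open set is compact, so continuity of $d(\cdot,y)$ forces the distortion from $y$ to that complement to be bounded away from zero. The paper gets this directly from the extreme value theorem (the minimum of $d(\cdot,y)$ on $\complement U$ is attained and positive), whereas you reach the same conclusion by a contradiction argument with nets; both are valid, the paper's being slightly more economical.
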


\begin{proof}
We have to prove that a subset $B\subseteq C$ is open if and only if for any $%
y\in B$ there exists a ball that is a subset of $B$ and contains $y$. Assume
that $B\subset C$ is open and that $y\in B.$ Then $\complement B$ compact.
Hence, the function $x\rightarrow d\left( x,y\right) $ has a minimum $r$ on $%
\complement B$ and $r$ must be positive because $r=d\left( x,y\right) =0$
would imply that $x=y\in B.$ Therefore $\left\{ {x\in C\mid d\left(
x,y\right) <r}\right\} \subseteq B.$

Continuity of $d$ implies that the balls $\left\{ {x\in C\mid d\left(
x,y\right) <r}\right\} $ are open. If any point in $B$ is contained in an
open ball, then $B$ is a union of open set and open.
\end{proof}

The following theorem may be considered as a kind of uniform continuity of
the distortion function or as a substitute for the triangular inequality
when $d$ is not a metric.

\begin{lemma}
\label{LemmaUnif}If $C$ is a compact set and $d:C\times C\rightarrow \mathbb{%
R}$ is a non-negative continuous distortion function such that $d\left(
x,y\right) =0$ if and only if $x=y$, then there exists a continuous function 
$f_{1}$ satisfying $f_{1}\left( 0\right) =0$ such that 
\begin{equation}
\left\vert d\left( x,y\right) -d\left( z,y\right) \right\vert \leq
f_{1}\left( d\left( z,y\right) \right) \text{ for }x,y,z\in C.  \label{unif}
\end{equation}
\end{lemma}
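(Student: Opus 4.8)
The plan is to obtain $f_1$ as a continuous majorant of the optimal modulus of continuity of $d$, reading \eqref{unif} as the assertion that perturbing one argument of $d$ by distortion $d(z,y)$ changes the value of $d$ by at most $f_1\bigl(d(z,y)\bigr)$, uniformly in the other argument. Since $d$ is continuous on the compact set $C\times C$ it is bounded, say $0\le d\le D$, so for every $r\ge 0$ the quantity
\[
\omega(r)=\sup\bigl\{\,\lvert d(x,y)-d(x,z)\rvert \;:\; x,y,z\in C,\ d(z,y)\le r\,\bigr\}
\]
is well defined and bounded by $D$. By construction $\omega$ is non-decreasing and $\lvert d(x,y)-d(x,z)\rvert\le\omega\bigl(d(z,y)\bigr)$ for all $x,y,z\in C$, so it remains only to dominate $\omega$ by a continuous function vanishing at the origin. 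Moreover $\omega(0)=0$, since $d(z,y)=0$ forces $z=y$ and hence $d(x,y)=d(x,z)$.

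The key step, and the main obstacle, is to show that $\omega(r)\to 0$ as $r\downarrow 0$; this is where compactness does the real work. Suppose not. Then there are $\varepsilon>0$ and triples $(x_n,y_n,z_n)\in C^3$ with $d(z_n,y_n)\to 0$ while $\lvert d(x_n,y_n)-d(x_n,z_n)\rvert\ge\varepsilon$ for every $n$. As $C^3$ is compact we may pass to a convergent subsequence $(x_n,y_n,z_n)\to(\bar x,\bar y,\bar z)$. Continuity of $d$ then gives $d(\bar z,\bar y)=\lim_n d(z_n,y_n)=0$, so $\bar z=\bar y$ and therefore $d(\bar x,\bar y)=d(\bar x,\bar z)$; but continuity also forces $\lvert d(x_n,y_n)-d(x_n,z_n)\rvert\to\lvert d(\bar x,\bar y)-d(\bar x,\bar z)\rvert=0$, contradicting the bound $\varepsilon$. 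Hence $\omega$ is non-decreasing, bounded, and right-continuous at $0$ with $\omega(0^+)=0$.

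It remains to replace $\omega$ by a genuinely continuous majorant. Extending $\omega$ to be constant for $r\ge D$, I would set $f_1(0)=0$ and, for $r>0$,
\[
f_1(r)=\frac{1}{r}\int_{r}^{2r}\omega(t)\,dt .
\]
Monotonicity of $\omega$ yields $\omega(r)\le f_1(r)\le\omega(2r)$, so $f_1\ge\omega$ everywhere, while $f_1(r)\le\omega(2r)\to 0=f_1(0)$ as $r\downarrow 0$ secures continuity at the origin; continuity on $(0,\infty)$ is immediate because $r\mapsto\int_0^r\omega$ is Lipschitz. Chaining the estimates gives $\lvert d(x,y)-d(x,z)\rvert\le\omega\bigl(d(z,y)\bigr)\le f_1\bigl(d(z,y)\bigr)$, which is the required bound \eqref{unif} with $f_1(0)=0$. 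The only genuinely delicate point is the compactness argument of the second paragraph; the passage from $\omega$ to $f_1$ is routine.
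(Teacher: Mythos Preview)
Your argument follows the same route as the paper's: the core of both proofs is the compactness contradiction establishing that the modulus $\omega(r)\to 0$ as $r\downarrow 0$. You have in fact been more careful than the paper in two respects. First, you tacitly repaired the evident misprint in \eqref{unif}: as literally stated the inequality would, on setting $z=y$, force $d(x,y)\le f_1(0)=0$ for all $x$, which is absurd; the intended bound is the one you prove, namely $\lvert d(x,y)-d(x,z)\rvert\le f_1\bigl(d(z,y)\bigr)$. Second, the paper's proof stops after the contradiction step and never actually exhibits a \emph{continuous} $f_1$, whereas your averaged majorant $f_1(r)=\tfrac{1}{r}\int_r^{2r}\omega$ does the job cleanly.

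One small technical correction: the space $C$ is assumed compact but not metrizable, so a sequence in $C^3$ need not admit a convergent subsequence. Replace ``pass to a convergent subsequence'' by ``pass to a convergent subnet'', exactly as the paper does; the remainder of your argument is unaffected.
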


\begin{proof}
Assume that the theorem does not hold. Then there exists $\epsilon >0$ and a
net $\left( x_{\lambda },y_{\lambda },z_{\lambda }\right) _{\lambda \in
\Lambda }$ such that 
\begin{equation*}
d\left( x_{\lambda },y_{\lambda }\right) -d\left( z_{\lambda },y_{\lambda
}\right) >\epsilon
\end{equation*}%
and $d\left( z_{\lambda },y_{\lambda }\right) \rightarrow 0.$ A net in a
compact set has a convergent subnet so without loss of generality we may
assume that the net $\left( x_{\lambda },y_{\lambda },z_{\lambda }\right)
_{\lambda \in \Lambda }$ converges to some triple $\left( x_{\infty
},y_{\infty },z_{\infty }\right) .$ By continuity of the distortion function
we get 
\begin{equation*}
d\left( x_{\infty },y_{\infty }\right) -d\left( z_{\infty },y_{\infty
}\right) \geq \epsilon
\end{equation*}%
and $d\left( z_{\infty },y_{\infty }\right) =0,$ which implies $z_{\infty
}=y_{\infty }$ and we have a contradiction.
\end{proof}

We note that if a distortion function satisfies (\ref{unif}) then it defines
a topology in which the distortion balls are open.

In order to define the weak topology on probability distributions we extend
the distortion function from $C\times C$ to $M_{+}^{1}\left( C\right) \times
M_{+}^{1}\left( C\right) $ via 
\begin{equation*}
d\left( P,Q\right) =\inf E\left[ \ d\left( X,Y\right) \right] ,
\end{equation*}%
where $X$ and $Y$ are random variables with values in $C$ and the infimum is
taken all joint distributions on $\left( X,Y\right) $ such that the marginal
distribution of $X$ is $P$ and the marginal distribution of $Y$ is $Q.$ The
distortion function is continuous so $\left( x,y\right) \rightarrow d\left(
x,y\right) $ has a maximum that we denote $d_{\max }.$

\begin{theorem}
If $G$ is a compact set and $d:C\times C\rightarrow \mathbb{R}$ is a
non-negative continuous distortion function such that $d\left( x,y\right) =0$
if and only if $x=y$, then 
\begin{equation*}
\left\vert d\left( P,Q\right) -d\left( S,Q\right) \right\vert \leq
f_{2}\left( d\left( S,P\right) \right) \text{ for }P,Q,S\in M_{+}^{1}\left(
C\right)
\end{equation*}%
for some continuous function $f_{2}$ satisfying $f_{2}\left( 0\right) =0.$
\end{theorem}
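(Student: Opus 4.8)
The plan is to repeat the compactness argument used to prove Lemma~\ref{LemmaUnif}, one level up, with $C$ replaced by the space $M_{+}^{1}(C)$ of Borel probability measures carrying the weak topology. Three ingredients are needed. (a) Since $C$ is compact, $M_{+}^{1}(C)$ is compact in the weak topology, so every net has a convergent subnet. (b) The extended functional $(P,Q)\mapsto d(P,Q)$ is continuous on $M_{+}^{1}(C)\times M_{+}^{1}(C)$: the set $\Pi(P,Q)$ of couplings with the prescribed marginals is a weakly closed, hence compact, subset of $M_{+}^{1}(C\times C)$, the cost $(x,y)\mapsto d(x,y)$ is continuous and bounded by $d_{\max}$, and the optimal transport value depends continuously on the marginals. (c) $d(P,S)=0$ implies $P=S$: by compactness of $\Pi(P,S)$ and continuity of $\pi\mapsto\int d\,\mathrm{d}\pi$ the infimum defining $d(P,S)$ is attained by some $\pi^{\ast}$ with $\int d\,\mathrm{d}\pi^{\ast}=0$, and since $d\geq 0$ with $\{d=0\}$ equal to the diagonal, $\pi^{\ast}$ is concentrated on the diagonal and therefore has equal marginals.

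Granting these, I would argue by contradiction exactly as for Lemma~\ref{LemmaUnif}. Put $\phi(t)=\sup\{\,|d(P,Q)-d(S,Q)|:P,Q,S\in M_{+}^{1}(C),\ d(S,P)\leq t\,\}$; this is non-decreasing, bounded by $d_{\max}$, and vanishes at $0$ by ingredient (c). If $\phi(t)\not\to 0$ as $t\downarrow 0$, there are an $\epsilon>0$ and a net $(P_{\lambda},Q_{\lambda},S_{\lambda})$ with $|d(P_{\lambda},Q_{\lambda})-d(S_{\lambda},Q_{\lambda})|>\epsilon$ while $d(S_{\lambda},P_{\lambda})\to 0$. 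Passing to a convergent subnet $(P_{\lambda},Q_{\lambda},S_{\lambda})\to(P_{\infty},Q_{\infty},S_{\infty})$ and using (b), we get $|d(P_{\infty},Q_{\infty})-d(S_{\infty},Q_{\infty})|\geq\epsilon$ and $d(S_{\infty},P_{\infty})=0$, whence $S_{\infty}=P_{\infty}$ by (c) and $d(P_{\infty},Q_{\infty})=d(S_{\infty},Q_{\infty})$, a contradiction. Hence $\phi(t)\to 0$, and since $\phi$ is non-decreasing with $\phi(0^{+})=0$ it is dominated on $[0,d_{\max}]$ by a continuous function $f_{2}$ with $f_{2}(0)=0$; then $|d(P,Q)-d(S,Q)|\leq\phi(d(S,P))\leq f_{2}(d(S,P))$, as required.

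The step I expect to be the real obstacle is ingredient (b), the joint weak continuity of $(P,Q)\mapsto d(P,Q)$. Lower semicontinuity is routine: a weak limit of optimal couplings is again a coupling with the right marginals, and the continuous bounded cost passes to the limit. Upper semicontinuity — given $P_{\lambda}\to P$ and $Q_{\lambda}\to Q$, producing couplings $\pi_{\lambda}\in\Pi(P_{\lambda},Q_{\lambda})$ whose cost converges to the optimal cost for $(P,Q)$ — is the standard stability statement for optimal transport on a compact space, and it is where the work sits; it can be obtained by a gluing argument together with Lemma~\ref{LemmaUnif}, which precisely controls the change in cost caused by replacing one marginal by a nearby one.

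Alternatively, one can bypass (b) and argue directly from Lemma~\ref{LemmaUnif}. Fix $\delta>0$, choose a coupling $(Z,X)$ of $(S,P)$ with $\mathbb{E}[d(Z,X)]\leq d(S,P)+\delta$ and a coupling $(X,Y)$ of $(P,Q)$ with $\mathbb{E}[d(X,Y)]\leq d(P,Q)+\delta$, and glue them along their common $X$-marginal to a joint law of $(Z,X,Y)$; then $(Z,Y)$ couples $(S,Q)$, so applying Lemma~\ref{LemmaUnif} with $Y$ as the common point,
\begin{equation*}
d(S,Q)\leq\mathbb{E}[d(Z,Y)]\leq\mathbb{E}[d(X,Y)]+\mathbb{E}\bigl[f_{1}(d(Z,X))\bigr].
\end{equation*}
Replacing $f_{1}$ by a continuous increasing concave majorant $g$ with $g(0)=0$, Jensen's inequality and monotonicity give $\mathbb{E}[f_{1}(d(Z,X))]\leq g(\mathbb{E}[d(Z,X)])\leq g(d(S,P)+\delta)$, so letting $\delta\downarrow 0$ yields $d(S,Q)-d(P,Q)\leq g(d(S,P))$. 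For the opposite inequality, note that Lemma~\ref{LemmaUnif} with $x=y$ gives the quantitative symmetry $d(P,S)\leq f_{1}(d(S,P))$; running the same argument with the roles of $P$ and $S$ exchanged gives $d(P,Q)-d(S,Q)\leq g(d(P,S))\leq g\bigl(g(d(S,P))\bigr)$. Taking $f_{2}:=g+g\circ g$ then proves the theorem.
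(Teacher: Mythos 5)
Your second, ``alternative'' argument is essentially the paper's own proof: the paper likewise glues a coupling of $(S,P)$ to a coupling of $(P,Q)$ along the common marginal $P$ and then bounds $E\left[ f_{1}\left( d\left( Z,X\right) \right) \right]$ in terms of $E\left[ d\left( Z,X\right) \right] \leq d\left( S,P\right)$ via Lemma \ref{LemmaUnif}. The only substantive difference is the final step: you pass to a concave increasing majorant $g$ of $f_{1}$ and apply Jensen's inequality, whereas the paper splits the expectation at the event $d\left( Z,X\right) \leq \delta$, uses Markov's inequality on the tail, and then sets $\delta =\left( d\left( S,P\right) \right) ^{1/2}$; both devices are standard and both produce a valid $f_{2}$. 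Your write-up is in fact more careful than the paper's on two points that the paper leaves implicit: you state the gluing and the near-optimality of the chosen couplings explicitly, and you handle the two directions of the inequality separately (which matters because $d$ is not assumed symmetric), using the quantitative symmetry $d\left( x,z\right) \leq f_{1}\left( d\left( z,x\right) \right)$ supplied by the lemma. Your first route --- rerunning the compactness/net argument of Lemma \ref{LemmaUnif} on $M_{+}^{1}\left( C\right)$ --- is a genuinely different idea, but, as you yourself observe, it rests on ingredient (b), the joint weak continuity of $\left( P,Q\right) \mapsto d\left( P,Q\right)$, which is essentially the statement being proved (the paper only defines the weak topology by means of $d$ \emph{after} this theorem), so that route is not self-contained; it is right that you fell back on the direct argument. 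A last remark: the displayed inequality in Lemma \ref{LemmaUnif} contains a typo --- $f_{1}\left( d\left( z,y\right) \right)$ should read $f_{1}\left( d\left( z,x\right) \right)$, as both its proof and its use here show --- and your argument correctly works with the intended form.
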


\begin{proof}
According to Lemma \ref{LemmaUnif} there exists a function $f_{1}$
satisfying (\ref{unif}). We use that 
\begin{align*}
E\left[ \left\vert d\left( X,Y\right) -d\left( Z,Y\right) \right\vert \right]
& \leq E\left[ f_{1}\left( d\left( Z,X\right) \right) \right] \\
& =E\left[ f_{1}\left( d\left( Z,X\right) \right) \mid d\left( Z,X\right)
\leq \delta \right] \cdot P\left( d\left( Z,X\right) \leq \delta \right) \\
& +E\left[ f_{1}\left( d\left( Z,X\right) \right) \mid d\left( Z,X\right)
>\delta \right] \cdot P\left( d\left( Z,X\right) >\delta \right) \\
& \leq f_{1}\left( \delta \right) \cdot 1+f_{1}\left( d_{\max }\right) \cdot 
\frac{E\left[ d\left( Z,X\right) \right] }{\delta } \\
& \leq f_{1}\left( \delta \right) +f_{1}\left( d_{\max }\right) \cdot \frac{%
d\left( S,P\right) }{\delta }.
\end{align*}%
This hold for all $\delta >0$ and in particular for $\delta =\left( d\left(
S,P\right) \right) ^{1/2}$, which proves the theorem.
\end{proof}

The theorem can be used to construct the \emph{weak topology} on $%
M_{+}^{1}\left( C\right) $ with 
\begin{equation*}
\left\{ P\in M_{+}^{1}\left( C\right) \mid d\left( P,Q\right) <r\right\} ,
\end{equation*}
$P\in M_{+}^{1}\left( C\right) ,r>0$ as open balls that generate the
topology. We note without proof that this definition is equivalent with the
quite different definition of weak topology that one will find in most
textbooks.

For a group $G$ we assume that the distortion function is \emph{right
invariant} in the sense that for all $x,y,z\in G$ a distortion function $d$
satisfies%
\begin{equation*}
d\left( x\ast z,y\ast z\right) =d\left( x,y\right) .
\end{equation*}%
A right invariant distortion function satisfies $d\left( x,y\right) =d\left(
x\ast y^{-1},e\right) $, so right invariant continuous distortion functions
of a group can be constructed from non-negative functions with a minimum in $%
e$.

\section{The Haar measure\label{SecHaar}}

We use $\ast $ to denote convolution of probability measures on $G.$ For $%
g\in G$ we shall use $g\ast P$ to denote the $g$-translation of the measure $%
P$ or, equivalently, the convolution with a measure concentrated in $g$. The 
$n$-fold convolution of a distribution $P$ with itself will be denoted $%
P^{\ast n}.$ For random variables with values in $G$ one can formulate an
analog of the central limit theorem. We recall some facts about probability
measures on compact groups and their \emph{Haar measures}.

\begin{definition}
Let $G$ be a group. A measure $P$ is said to be a \emph{left Haar measure}
if $g\ast P=P$ for any $g\in G$. Similarly, $P$ is said to be a \emph{right
Haar measure} if $P\ast g=P$ for any $g\in G.$ A measure is said to be a 
\emph{Haar measure} if it is both a left Haar measure and a right Haar
measure.
\end{definition}

\begin{example}
The uniform distribution on $SO\left( 2\right) $ or $\mathbb{R}/2\pi Z$ has
density $1/2\pi $ with respect to the Lebesgue measure on $\left[ 0;2\pi %
\right[ .$ The function 
\begin{equation}
f\left( x\right) =1+\sum_{n=1}^{\infty }a_{n}\cos \left( n\left( x+\phi
_{n}\right) \right)  \label{Fourier}
\end{equation}%
is a density on a probability distribution $P$ on $SO\left( 2\right) $ if
the Fourier coefficients $a_{n}$ are sufficiently small so that $f$ is
non-negative. A sufficient condition for $f$ to be non-negative is that $%
\sum_{n=1}^{\infty }\left\vert a_{n}\right\vert \leq 1.$

Translation by $y$ gives a distribution with density 
\begin{equation*}
f\left( x-y\right) =1+\sum_{n=1}^{\infty }a_{n}\cos \left( n\left( x-y+\phi
_{n}\right) \right) .
\end{equation*}%
The distribution $P$ is invariant if and only if $f$ is $1$ or,
equivalently, all Fourier coefficients $\left( a_{n}\right) _{n\in \mathbb{N}%
}$ are $0.$
\end{example}

A measure $P$ on $G$ is said to have \emph{full support} if the support of $%
P $ is $G,$ i.e. $P\left( A\right) >0$ for any non-empty open set $%
A\subseteq G.$ The following theorem is well-known \cite%
{Haar1933,Halmos1950,Conway90}.

\begin{theorem}
\label{Thm1}Let $U$ be a probability measure on the compact group $G.$ Then
the following four conditions are equivalent.

\begin{itemize}
\item $U$ is a left Haar measure.

\item $U$ is a right Haar measure.

\item $U$ has full support and is idempotent in the sense that $U\ast U=U.$

\item There exists a probability measure $P$ on $G$ with full support such
that $P\ast U=U.$

\item There exists a probability measure $P$ on $G$ with full support such
that $U\ast P=U.$
\end{itemize}

In particular a Haar probability measure is unique.
\end{theorem}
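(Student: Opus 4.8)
The plan is to establish a cycle of implications among the five conditions, together with the standard existence result, and then deduce uniqueness. First I would recall (or cite) the existence of a Haar probability measure on a compact group, which we may take as known; the real content here is the equivalence of the five characterizations and uniqueness. The cleanest route is to prove the chain ``left Haar'' $\Rightarrow$ ``idempotent with full support'' $\Rightarrow$ (the fourth and fifth conditions, which are formally weaker) $\Rightarrow$ back to ``left Haar'' (resp. ``right Haar''), and to observe by the obvious left-right symmetry (replacing $G$ by its opposite group, or $P$ by the image of $P$ under $g\mapsto g^{-1}$) that the first two conditions are interchangeable.

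For the implication ``$U$ left Haar $\Rightarrow$ $U\ast U=U$ and $U$ has full support'': idempotence is immediate since $U\ast U=\int (g\ast U)\,dU(g)=\int U\,dU(g)=U$. For full support, if $A$ is a nonempty open set, pick $a\in A$; then $a^{-1}\ast A$ is an open neighbourhood of $e$, hence has positive $U$-measure (a left Haar measure assigns positive mass to every nonempty open set, because finitely many translates of any such set cover the compact $G$ and left-invariance makes them equimeasurable), and by left-invariance $U(A)=U(a^{-1}\ast A)>0$. The implications to the fourth and fifth conditions are trivial: take $P=U$ itself, which has full support.

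The key step — and the main obstacle — is the converse: if there is a full-support measure $P$ with $P\ast U=U$, then $U$ is a left Haar measure (and symmetrically for the fifth condition). This is where a genuine argument is needed, and it is natural to phrase it via the Markov chain / convolution-iteration idea the paper advertises. From $P\ast U=U$ one gets $P^{\ast n}\ast U=U$ for all $n$, hence $\bar P_n\ast U=U$ where $\bar P_n=\frac1n\sum_{k=1}^n P^{\ast k}$ are the Cesàro averages. By weak-$\ast$ compactness of $M_+^1(G)$, a subnet of $(\bar P_n)$ converges to some $Q$; the Cesàro structure forces $g\ast Q=Q$ for all $g$ in the support of $P$ (a standard ``the average doesn't move'' argument), and since $P$ has full support and $g\mapsto g\ast Q$ is weak-$\ast$ continuous, $g\ast Q=Q$ for all $g\in G$, i.e. $Q$ is a left Haar measure. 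Passing to the limit in $\bar P_n\ast U=U$ gives $Q\ast U=U$; but $Q$ left-invariant means $Q\ast U=\int (g\ast U)\,dQ(g)$, and also $Q\ast U = \int (Q\ast\delta_g)\,dU(g)=\int Q\,dU(g)=Q$ using right-translation invariance of a left Haar measure (which one checks: $Q\ast g$ is again left Haar, and by the uniqueness we are about to prove — or by a direct averaging argument — equals $Q$). Hence $U=Q$ is a left Haar measure.

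Finally, uniqueness follows formally from the equivalence: if $U_1,U_2$ are Haar probability measures, then $U_1\ast U_2=\int (g\ast U_2)\,dU_1(g)=\int U_2\,dU_1(g)=U_2$ since $U_2$ is left Haar, and symmetrically $U_1\ast U_2=\int (U_1\ast g)\,dU_2(g)=\int U_1\,dU_2(g)=U_1$ since $U_1$ is right Haar, so $U_1=U_2$. I expect the delicate point to be making the subnet/Cesàro convergence argument rigorous in the weak-$\ast$ topology on $M_+^1(G)$ and justifying the interchange of limit and convolution; everything else is bookkeeping with invariance identities, and the left-right symmetry lets us halve the work.
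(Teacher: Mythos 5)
The paper does not actually prove this theorem: it states it as well known and refers the reader to Haar, Halmos and Conway, so you are supplying content the paper omits. Your overall architecture (a cycle of implications, left--right symmetry via $g\mapsto g^{-1}$, and the final two-line uniqueness computation $U_1\ast U_2=U_2=U_1$) is the standard and correct skeleton, and the easy implications (left Haar $\Rightarrow$ idempotent with full support $\Rightarrow$ conditions four and five) are fine.

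However, there is a genuine gap exactly where you locate ``the key step.'' From the Ces\`aro telescoping $P\ast \bar P_n-\bar P_n=\frac1n\left(P^{\ast(n+1)}-P\right)\to 0$ you may legitimately conclude $P\ast Q=Q$ and, with a little more work, $Q\ast Q=Q$; but the assertion that ``the Ces\`aro structure forces $g\ast Q=Q$ for all $g$ in the support of $P$'' does not follow from any ``the average doesn't move'' argument --- $g\ast\bar P_n-\bar P_n$ has no reason to be small for a fixed $g$. Passing from the measure-level identity $P\ast Q=Q$ to pointwise invariance under every element of $\operatorname{supp}(P)$ is precisely the content of the Kawada--It\^o theorem on idempotent measures, i.e.\ it is the whole difficulty of the implication you are trying to prove; asserting it reduces ``full-support idempotent $\Rightarrow$ Haar'' to itself. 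The standard way to fill the hole is a maximum-principle argument on $C(G)$: for the Markov operator $Tf(h)=\int f(g h)\,dP(g)$, any continuous fixed point $f$ attains its maximum at some $h_0$, the fixed-point equation forces $f(gh_0)=\max f$ for all $g\in\operatorname{supp}(P)=G$, hence $f$ is constant, and the mean ergodic theorem then identifies the Ces\`aro limit with the invariant measure. (A secondary, repairable circularity: you invoke ``the uniqueness we are about to prove'' to get $Q\ast g=Q$, but the uniqueness you prove at the end is for two-sided Haar measures; you first need the separate observation that a left Haar probability measure equals its image under inversion and is therefore also right Haar.)
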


In \cite{Haar1933,Halmos1950,Conway90} one can find the proof that any
locally compact group has a Haar measure. The unique Haar probability
measure on a compact group will be called the \emph{uniform distribution}
and denoted $U.$ \newline
For probability measures $P$ and $Q$ the \emph{information divergence from} $%
P$ \emph{to} $Q$ is defined by 
\begin{equation*}
D\left( P\Vert Q\right) =\left\{ 
\begin{array}{cc}
\int \log \frac{dP}{dQ}~dP, & \text{if }P\ll Q; \\ 
\infty , & \text{otherwise.}%
\end{array}%
\right.
\end{equation*}%
We shall often calculate the divergence from a distribution to the uniform
distribution $U,$ and introduce the notation 
\begin{equation*}
D\left( P\right) =D\left( P\Vert U\right) .
\end{equation*}%
For a random variable $X$ with values in $G$ we will sometimes write $%
D\left( X\Vert U\right) $ instead of $D\left( P\Vert U\right) $ when $X$ has
distribution $P.$

\begin{example}
The distribution $P$ with density $f$ given by (\ref{Fourier}) has 
\begin{eqnarray*}
D\left( P\right) &=&\frac{1}{2\pi }\int_{0}^{2\pi }f\left( x\right) \log
\left( f\left( x\right) \right) ~dx \\
&\approx &\frac{1}{2\pi }\int_{0}^{2\pi }f\left( x\right) \left( f\left(
x\right) -1\right) ~dx \\
&=&\frac{1}{2}\sum_{n=1}^{\infty }a_{n}^{2}.
\end{eqnarray*}
\end{example}

Let $G$ be a compact group with uniform distribution $U$ and let $F$ be a
closed subgroup of $G.$ Then the subgroup has a Haar probability measure $%
U_{F}$ and 
\begin{equation}
D\left( U_{F}\right) =\log \left( \left[ G:F\right] \right)  \label{coset}
\end{equation}%
where $\left[ G:F\right] $ denotes the index of $F$ in $G.$ In particular $%
D\left( U_{F}\right) $ is finite if and only if $\left[ G:F\right] $ is
finite.

\section{The rate distortion theory\label{SecRateDist}}

We will develop aspects of the rate distortion theory of a compact group $G.$
Let $P$ be a probability measure on $G.$ We observe that compactness of $G$
implies that a covering of $G$ by distortion balls of radius $\delta >0$
contains a finite covering. If $k$ is the number of balls in a finite
covering then $R_{P}\left( \delta \right) \leq \log \left( k\right) $ where $%
R_{P}$ is the rate distortion function of the probability measure $P.$ In
particular the rate distortion function is upper bounded. The entropy of a
probability distribution $P$ is given by $H\left( P\right) =R_{P}\left(
0\right) $. If the group is finite then the uniform distribution maximizes
the Shannon entropy $R_{P}\left( 0\right) $ but if the group is not finite
then in principle there is no entropy maximizer. As we shall see the uniform
distribution still plays the role of entropy maximizer in the sense that the
uniform distribution maximize the value $R_{P}\left( \delta \right) $ of the
rate distortion function for any positive distortion level $\delta >0$. The
rate distortion function $R_{P}$ can be studied using its convex conjugate $%
R_{P}^{\ast }$ given by%
\begin{equation*}
R_{P}^{\ast }\left( \beta \right) =\sup_{\delta }\beta \cdot \delta
-R_{P}\left( \delta \right) .
\end{equation*}%
The rate distortion function is then recovered by the formula%
\begin{equation*}
R_{P}\left( \delta \right) =\sup_{\beta }\beta \cdot \delta -R_{P}^{\ast
}\left( \beta \right) .
\end{equation*}%
The techniques are pretty standard \cite{Vogel92}.

\begin{theorem}
\label{RateDistThm}The rate distortion function of the uniform distribution
is given by 
\begin{equation*}
R_{U}^{\ast }\left( \beta \right) =\log \left( Z\left( \beta \right) \right)
\end{equation*}%
where $Z$ is the partition function defined by%
\begin{equation*}
Z\left( \beta \right) =\int_{G}\exp \left( \beta \cdot d\left( g,e\right)
\right) ~dUg.
\end{equation*}%
The rate distortion function of an arbitrary distribution $P$ satisfies%
\begin{equation}
R_{U}-D\left( P\Vert U\right) \leq R_{P}\leq R_{U}.  \label{opned}
\end{equation}
\end{theorem}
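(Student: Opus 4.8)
The plan is to carry out the whole argument at the level of the convex conjugate $R_{P}^{\ast }$, reducing everything to two elementary inequalities --- Jensen's inequality and the Gibbs (Donsker--Varadhan) inequality --- together with the invariance properties of the Haar measure $U$. The starting point is the Lagrangian dual form of the rate distortion function. Writing $R_{P}\left( \delta \right) =\inf \{I\left( X;Y\right) :X\sim P,\ E[d\left( X,Y\right) ]\leq \delta \}$ and using the standard variational expression $I\left( X;Y\right) =\inf _{Q}\int D\left( W\left( \cdot \mid x\right) \Vert Q\right) \,dP\left( x\right) $ for the mutual information of a channel $W$ with input distribution $P$, the routine rate--distortion duality (cf. \cite{Vogel92}) gives, for the values of $\beta $ for which $R_{P}^{\ast }$ is determined by the conjugacy relation, i.e. $\beta \leq 0$,
\begin{equation*}
R_{P}^{\ast }\left( \beta \right) =\sup _{W}\left[ \beta \,E[d\left( X,Y\right) ]-I\left( X;Y\right) \right] =\sup _{Q\in M_{+}^{1}\left( G\right) }\int _{G}\log \left( \int _{G}e^{\beta d\left( x,y\right) }\,dQ\left( y\right) \right) \,dP\left( x\right) ,
\end{equation*}
where the first equality is the minimax exchange in the Lagrangian (for fixed $W$ the supremum over $\delta \geq E[d]$ of $\beta \delta $ equals $\beta \,E[d]$ because $\beta \leq 0$), and the second equality is the Gibbs variational principle $\sup _{\nu }\{\int \varphi \,d\nu -D\left( \nu \Vert Q\right) \}=\log \int e^{\varphi }\,dQ$ applied fibrewise with $\varphi =\beta \,d\left( x,\cdot \right) $; all integrands are bounded since $d\leq d_{\max }$ on the compact group $G$.

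The next ingredient is a Haar identity. Since $d$ is right invariant, $d\left( x,y\right) =d\left( x\ast y^{-1},e\right) $, and since $U$ is simultaneously a left and a right Haar measure (Theorem \ref{Thm1}) and is invariant under inversion, for every fixed $x$ and $y$
\begin{equation*}
\int _{G}e^{\beta d\left( x,y\right) }\,dU\left( x\right) =\int _{G}e^{\beta d\left( x,y\right) }\,dU\left( y\right) =Z\left( \beta \right) .
\end{equation*}
Put $g_{Q}\left( x\right) =\log \int _{G}e^{\beta d\left( x,y\right) }\,dQ\left( y\right) $. Taking $Q=U$ and integrating against $U$ gives $\int g_{U}\,dU=\log Z\left( \beta \right) $, hence $R_{U}^{\ast }\left( \beta \right) \geq \log Z\left( \beta \right) $; conversely, for any $Q$, Jensen's inequality together with Fubini and the Haar identity gives $\int g_{Q}\,dU\leq \log \int _{G}\int _{G}e^{\beta d\left( x,y\right) }\,dQ\left( y\right) \,dU\left( x\right) =\log Z\left( \beta \right) $, hence $R_{U}^{\ast }\left( \beta \right) \leq \log Z\left( \beta \right) $. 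This proves the first assertion.

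For a general $P$ the same two computations yield the sandwich $R_{U}^{\ast }\leq R_{P}^{\ast }\leq R_{U}^{\ast }+D\left( P\Vert U\right) $. The lower bound is immediate: taking $Q=U$ in the dual formula for $R_{P}^{\ast }$ gives $R_{P}^{\ast }\left( \beta \right) \geq \int g_{U}\,dP=\log Z\left( \beta \right) =R_{U}^{\ast }\left( \beta \right) $. For the upper bound, apply the Donsker--Varadhan inequality $\int g\,dP\leq D\left( P\Vert U\right) +\log \int e^{g}\,dU$ with $g=g_{Q}$, noting $\log \int e^{g_{Q}}\,dU=\log Z\left( \beta \right) $ by the Haar identity; the supremum over $Q$ then gives $R_{P}^{\ast }\left( \beta \right) \leq D\left( P\Vert U\right) +\log Z\left( \beta \right) =R_{U}^{\ast }\left( \beta \right) +D\left( P\Vert U\right) $. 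Finally, the rate distortion function is convex and lower semicontinuous, so $R_{P}=R_{P}^{\ast \ast }$ and $R_{U}=R_{U}^{\ast \ast }$; since conjugation is order reversing and sends $R_{U}^{\ast }+D\left( P\Vert U\right) $ to $R_{U}^{\ast \ast }-D\left( P\Vert U\right) =R_{U}-D\left( P\Vert U\right) $, conjugating the sandwich yields $R_{U}-D\left( P\Vert U\right) \leq R_{P}\leq R_{U}$, which is (\ref{opned}).

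The only step that is not purely formal is the derivation of the dual formula for $R_{P}^{\ast }$: it uses the minimax interchange in the Lagrangian and the interchange of the supremum over channels with the integral over the source (fibrewise Gibbs optimization), where in principle one must address attainment of the optima and measurable selection of the conditional distributions. On a compact group with a bounded continuous distortion function these issues are harmless --- the relevant functionals are lower semicontinuous on a weak-$\ast $ compact set of measures --- and the argument is the classical one; everything after that is bookkeeping with Jensen, Donsker--Varadhan, Fubini, and Fenchel conjugacy.
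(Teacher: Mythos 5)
Your proof is correct, but it takes a genuinely different route from the paper's. The paper works in the primal: it proves a Shannon-type lower bound via the compensation identity $I(X;\hat{X})=D\left(X\ast \hat{X}^{-1}\Vert U\mid \hat{X}\right)-D\left(X\Vert U\right)\geq D\left(P_{\beta }\Vert U\right)-D\left(P\Vert U\right)$, shows this bound is achieved for $P=U$ by an explicit coupling ($\hat{X}$ uniform, $X=Y\ast \hat{X}$ with $Y\sim P_{\beta }$ independent), and obtains $R_{P}\leq R_{U}$ by feeding an arbitrary $P$ through the translation-invariant test channel that is optimal for $U$. You instead work entirely in the Fenchel dual, invoking the standard Csisz\'{a}r--Blahut representation $R_{P}^{\ast }\left(\beta \right)=\sup_{Q}\int \log \bigl(\int e^{\beta d\left(x,y\right)}\,dQ\left(y\right)\bigr)dP\left(x\right)$ for $\beta \leq 0$ and then reducing everything to Jensen, Donsker--Varadhan and the Haar identity $\int e^{\beta d\left(x,y\right)}\,dU=Z\left(\beta \right)$; the sandwich $R_{U}^{\ast }\leq R_{P}^{\ast }\leq R_{U}^{\ast }+D\left(P\Vert U\right)$ then conjugates back to (\ref{opned}). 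The trade-offs: the paper's argument is constructive (it exhibits the optimal channel) and proves both inequalities of (\ref{opned}) pointwise in $\delta$ without passing through biconjugation, whereas your argument leans on the duality formula and on $R_{P}=R_{P}^{\ast \ast }$ (which the paper itself asserts as standard, so this is fair), but in exchange it derives $R_{U}^{\ast }=\log Z$ and both halves of (\ref{opned}) from a single two-line sandwich, and it correctly makes explicit the restriction to $\beta \leq 0$, which the theorem statement leaves implicit. Your handling of the fibrewise Gibbs optimization is fine --- the optimizer $d\nu _{x}/dQ\propto e^{\beta d\left(x,\cdot \right)}$ is jointly continuous, so measurable selection is not an issue --- and the use of invariance of $U$ under inversion and translation to get the two Haar identities is valid on a compact group.
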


\begin{proof}
First we prove a Shannon type lower bound on the rate distortion function of
an arbitrary distribution $P$ on the group. Let $X$ be a random variable
with values in $G$ and distribution $P$, and let $\hat{X}$ be a random
variable coupled with $X$ such that the mean distortion $E\left[ d\left( X,%
\hat{X}\right) \right] $ equals $\delta $. Then%
\begin{align}
I\left( X;\hat{X}\right) & =D\left( X\Vert U\mid \hat{X}\right) -D\left(
X\Vert U\right) \\
& =D\left( X\ast \hat{X}^{-1}\Vert U\mid \hat{X}\right) -D\left( X\Vert
U\right) \\
& \geq D\left( X\ast \hat{X}^{-1}\Vert U\right) -D\left( X\Vert U\right) .
\label{nedre}
\end{align}%
Now, $E\left[ d\left( X,\hat{X}\right) \right] =E\left[ d\left( X\ast \hat{X}%
^{-1},e\right) \right] $ and 
\begin{equation*}
D\left( X\ast \hat{X}^{-1}\Vert U\right) \geq D\left( P_{\beta }\Vert
U\right)
\end{equation*}%
where $P_{\beta }$ is the distribution that maximizes divergence under the
constraint $E\left[ d\left( Y,e\right) \right] =\delta $ when $Y$ has
distribution $P_{\beta }.$ The distribution $P_{\beta }$ is given by the
density%
\begin{equation*}
\frac{dP_{\beta }}{dU}\left( g\right) =\frac{\exp \left( \beta \cdot d\left(
g,e\right) \right) }{Z\left( \beta \right) }.
\end{equation*}%
where $\beta $ is determined by the condition $\delta =Z^{\prime }\left(
\beta \right) /Z\left( \beta \right) .$\newline
If $P$ is uniform then a joint distribution is obtained by choosing $\hat{X}$
uniformly distributed, and choosing $Y$ distributed according to $P_{\beta }$
and independent of $\hat{X}.$ Then $X=Y\ast \hat{X}$ is distributed
according to $P_{\beta }\ast U=U$, and we have equality in (\ref{nedre}).
Hence the rate determined the lower bound (\ref{nedre}) is achievable for
the uniform distribution, which prove the first part of the theorem, and the
left inequality in (\ref{opned}).\newline
The joint distribution on $\left( X,\hat{X}\right) $ that achieved the rate
distortion function when $X$ has a uniform distribution, defines a Markov
kernel $\Psi :X\rightarrow \hat{X}$ that is invariant under translations in
the group. For any distribution $P$ the joint distribution on $\left( X,\hat{%
X}\right) $ determined by $P$ and $\Psi $ gives an achievable pair of
distortion, and rate that is on the rate distortion curve of the uniform
distribution. This proves the right inequality in Equation (\ref{opned}).
\end{proof}

\begin{example}
For the group $SO\left( 2\right) $ the rate distortion function can be
parametrized using the modified Bessel functions $I_{j},j\in \mathbb{N}_{0}$%
. The partition function is given by%
\begin{align*}
Z\left( \beta \right) & =\int_{G}\exp \left( \beta \cdot d\left( g,e\right)
\right) ~dUg \\
& =\frac{1}{2\pi }\int_{0}^{2\pi }\exp \left( \beta \cdot \left( 2-2\cos
x\right) \right) ~dx \\
& =\exp \left( 2\beta \right) \cdot \frac{1}{\pi }\int_{0}^{\pi }\exp \left(
-2\beta \cdot \cos x\right) ~dx \\
& =\exp \left( 2\beta \right) \cdot I_{0}\left( -2\beta \right) .
\end{align*}%
Hence $R_{U}^{\ast }\left( \beta \right) =$ $\log \left( Z\left( \beta
\right) \right) =2\beta +\log \left( I_{0}\left( -2\beta \right) \right) $.
The distortion $\delta $ corresponding to $\beta $ is given by 
\begin{equation*}
\delta =2-2\frac{I_{1}\left( -2\beta \right) }{I_{0}\left( -2\beta \right) }
\end{equation*}%
and the corresponding rate is 
\begin{eqnarray*}
R_{U}\left( \delta \right)  &=&\beta \cdot \delta -\left( 2\beta +\log
\left( I_{0}\left( -2\beta \right) \right) \right)  \\
&=&-\beta \cdot 2\frac{I_{1}\left( -2\beta \right) }{I_{0}\left( -2\beta
\right) }-\log \left( I_{0}\left( -2\beta \right) \right) .
\end{eqnarray*}%
These joint values of distortion and rate can be plotted with $\beta $ as
parameter as illustrated in Figure \ref{Bessel}.%
\begin{figure}[ptb]\begin{center}
\includegraphics[
natheight=12.9272in, natwidth=15.4585in, height=3.6832in, width=4.4002in]
{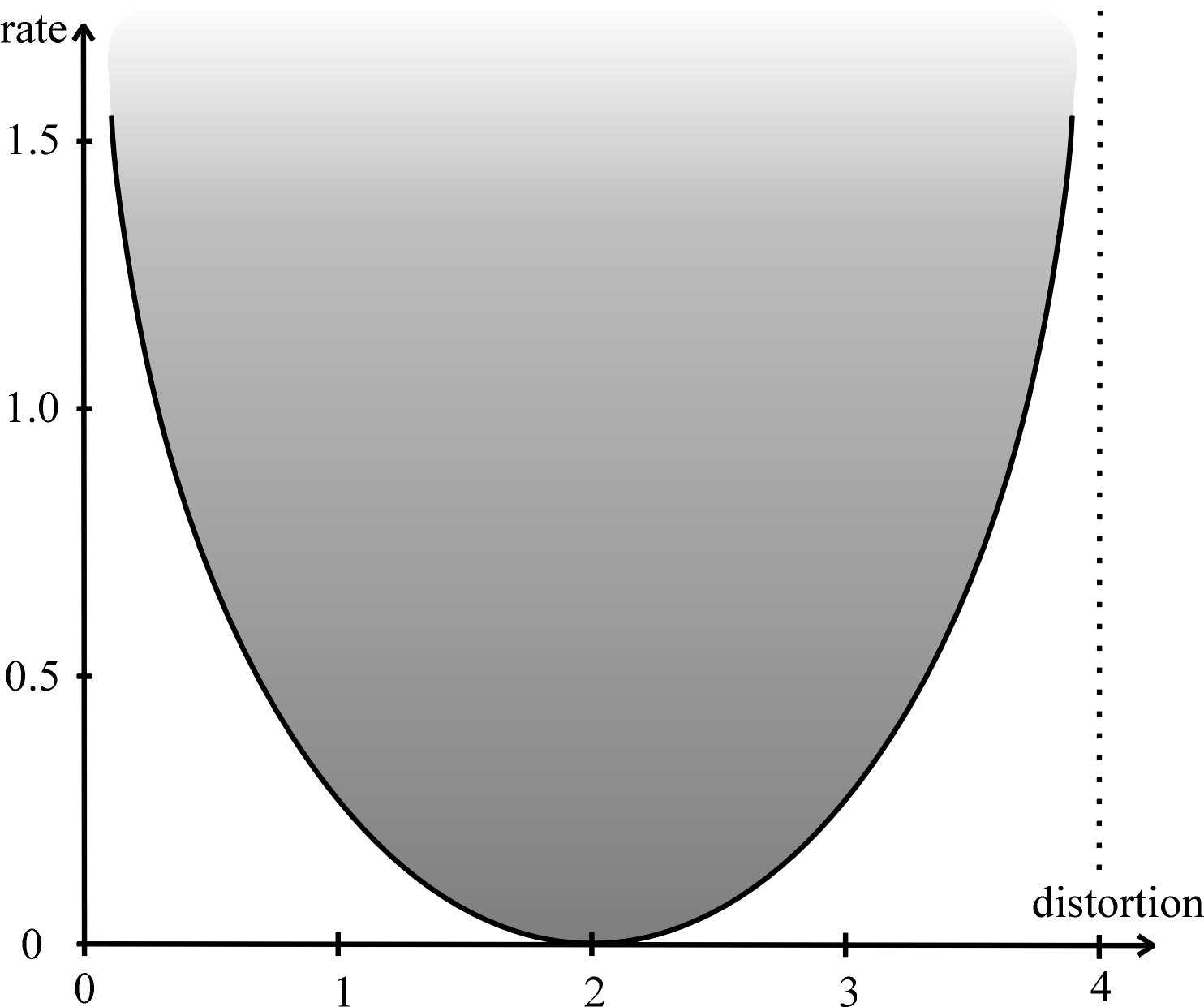}\caption{The rate distortion
region of the uniform distribution on $SO\left( 2\right) $ is shaded. The
rate distortion function is the lower bounding curve. In the figure the rate
is measured in nats. The critical distortion $d_{crit}$ equals 2, and the
dashed line indicates $d_{\max }=4.$}\label{Bessel}%
\end{center}\end{figure}%
\end{example}

The minimal rate of the uniform distribution is achieved when $X$ and $\hat{X%
}$ are independent. In this case the distortion is $E\left[ d\left( X,\hat{X}%
\right) \right] =\int_{G}d\left( x,e\right) ~dPx.$ This distortion level
will be called the critical distortion and will be denoted $d_{crit}.$ On
the interval $\left] 0;d_{crit}\right] $ the rate distortion function is
decreasing and the distortion rate function is the inverse $R_{P}^{-1}$ of
the rate distortion function $R_{P}$ on this interval. The distortion rate
function satisfies:

\begin{theorem}
\label{inverse}The distortion rate function of an arbitrary distribution $P$
satisfies%
\begin{equation}
R_{U}^{-1}\left( \delta \right) -f_{2}\left( d\left( P,U\right) \right) \leq
R_{P}^{-1}\left( \delta \right) \leq R_{U}^{-1}\left( \delta \right) ~\text{%
for }\delta \leq d_{crit}
\end{equation}%
for some increasing continuous function $f_{2}$ satisfying $f_{2}\left(
0\right) =0.$
\end{theorem}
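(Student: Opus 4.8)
The plan is to derive this statement by "inverting" the rate–distortion bounds already established in Theorem \ref{RateDistThm}, combined with the weak-continuity estimate of the extended distortion function. First I would record the right-hand inequality, which is essentially immediate: by \eqref{opned} we have $R_P \le R_U$ pointwise, and since both functions are decreasing and invertible on $\left]0;d_{crit}\right]$, inverting a smaller function gives a smaller inverse, i.e. $R_P^{-1}(\delta) \le R_U^{-1}(\delta)$. (One should be slightly careful about the direction of monotonicity of the inversion — because the rate distortion function is \emph{decreasing}, a pointwise inequality $R_P\le R_U$ between two decreasing functions does translate to $R_P^{-1}\le R_U^{-1}$ on the common range, and this is the only subtlety on this side.)

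The left-hand inequality is the substantive part. The idea is that the lower bound $R_P \ge R_U - D(P\Vert U)$ from \eqref{opned} is the wrong kind of bound to invert directly, because $D(P\Vert U)$ can be infinite while $d(P,U)$ is small; so instead I would go back to the Shannon-type argument and bound the distortion incurred rather than the rate. Concretely: fix a rate level $R$ and let $\delta_U = R_U^{-1}(R)$ be the optimal distortion at that rate for the uniform source, achieved by the translation-invariant Markov kernel $\Psi$ constructed in the proof of Theorem \ref{RateDistThm}. Feeding the source $P$ through this \emph{same} kernel $\Psi$ yields a pair $(X,\hat X)$ with mutual information at most $R$ (since $\Psi$ was the optimal test channel for $U$ and pushing a different marginal through it cannot increase the relevant rate — this is exactly the argument giving the right inequality in \eqref{opned}), and with some achieved distortion $\delta_P = E[d(X,\hat X)]$. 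Then $R_P^{-1}(R) \le \delta_P$ [since the distortion-rate function is the infimal achievable distortion at rate $R$], so it suffices to bound $\delta_P - \delta_U = d(P \text{ through }\Psi, \cdot) - d(U\text{ through }\Psi,\cdot)$ from above. Because $\Psi$ is a fixed continuous channel, $P \mapsto (\text{output distortion under }\Psi)$ is a weakly continuous functional, and the difference in achieved distortions between source $P$ and source $U$ is controlled by $d(P,U)$ via the Wasserstein-type estimate: coupling $X\sim P$ and $X'\sim U$ optimally so that $E[d(X,X')]$ is close to $d(P,U)$, and running both through $\Psi$, one gets $\hat X$ and $\hat X'$ with $|E[d(X,\hat X)] - E[d(X',\hat X')]|$ bounded by a continuous function of $d(P,U)$ vanishing at $0$. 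This uses the theorem following Lemma \ref{LemmaUnif} (the $f_2$ estimate) applied in the appropriate product-space form, and $f_2$ is exactly the function appearing in the statement.

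The main obstacle I anticipate is making the middle step — "push $P$ through $\Psi$ and control the output distortion" — fully rigorous, specifically verifying that running the \emph{optimal uniform channel} $\Psi$ on the coupled pair $(X,X')$ simultaneously gives (i) a rate no larger than $R$ for the $P$-branch and (ii) an output coupling $(\hat X, \hat X')$ for which $d(\hat X,\hat X')$ is itself small in expectation when $d(X,X')$ is. Point (ii) is where translation invariance of $\Psi$ is essential: because $\Psi$ commutes with group translations, $\hat X = X \ast Y$ and $\hat X' = X' \ast Y$ for a common noise variable $Y$ independent of $(X,X')$, whence $d(\hat X,\hat X') = d(X\ast Y, X'\ast Y) = d(X,X')$ by right invariance of $d$ — so in fact the output distortion between the two branches is \emph{exactly} $d(X,X')$, and then the triangle-inequality-substitute estimate of the $f_2$-theorem converts this into the bound $|\delta_P - \delta_U| \le f_2(d(P,U))$. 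Once that identity is in hand the rest is bookkeeping: combine with $R_P^{-1}(R)\le \delta_P$ and $\delta_U = R_U^{-1}(R)$ to get $R_P^{-1}(R) \ge R_U^{-1}(R) - f_2(d(P,U))$ after also running the symmetric comparison, and note $\delta \le d_{crit}$ guarantees we are on the decreasing branch where these inverses are well defined.
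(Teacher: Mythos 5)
There is a genuine gap in the substantive (left-hand) inequality: your construction runs in the wrong direction. Pushing $P$ through the optimal test channel $\Psi$ for the uniform source produces one particular admissible coupling for the source $P$, and hence only an \emph{upper} bound on the infimum defining the distortion rate function: from $R_P^{-1}(R)\leq \delta_P$ and $\left\vert \delta_P-\delta_U\right\vert \leq f_2\left( d\left( P,U\right) \right)$ you can conclude $R_P^{-1}(R)\leq R_U^{-1}(R)+f_2\left( d\left( P,U\right) \right)$, which is an achievability statement, not the converse bound $R_P^{-1}(R)\geq R_U^{-1}(R)-f_2\left( d\left( P,U\right) \right)$ that the theorem asserts. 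The final step of your plan (``combine with $R_P^{-1}(R)\leq\delta_P$ \dots\ to get $R_P^{-1}(R)\geq R_U^{-1}(R)-f_2$'') is therefore a non sequitur, and the ``symmetric comparison'' you invoke to repair it is not available within your framework: it would require feeding $U$ through the (near-)optimal channel for $P$, which is not translation invariant, so the identity $d(\hat X,\hat X')=d(X,X')$ on which your whole mechanism rests is lost.

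To get the lower bound you must start from an \emph{arbitrary} test channel for $P$: let $(X,Y)$ be any coupling with $X\sim P$ and $I(X;Y)\leq R$, let $Z\sim U$ be optimally coupled to $X$ so that $E\left[ d\left( X,Z\right) \right] =d\left( P,U\right)$, and extend to a joint law in which $Z$ and $Y$ are conditionally independent given $X$. Data processing gives $I(Z;Y)\leq I(X;Y)\leq R$; the pair $(Z,Y)$ is admissible for the uniform source, so $E\left[ d\left( Z,Y\right) \right] \geq R_U^{-1}\left( I(Z;Y)\right) \geq R_U^{-1}(R)$; and the $f_2$-estimate gives $E\left[ d\left( X,Y\right) \right] \geq E\left[ d\left( Z,Y\right) \right] -f_2\left( d\left( P,U\right) \right)$. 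Taking the infimum over all such $(X,Y)$ yields the left inequality; this is exactly the paper's argument. Note that no translation invariance is needed here, because the channel being compared is arbitrary, whereas your invariance trick --- correct as far as it goes --- can only ever produce achievability (upper) bounds on $R_P^{-1}$. Your treatment of the right-hand inequality, inverting $R_P\leq R_U$ from Theorem \ref{RateDistThm} on the decreasing branch, is fine and matches the paper.
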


\begin{proof}
The right hand side follows because $R_{U}$ is decreasing in the interval $%
\left[ 0;d _{crit}\right] $ Let $X$ be a random variable with
distribution $P$ and let $Y$ be a random variable coupled with $X.$ Let $Z$
be a random variable coupled with $X$ such that $E\left[ d\left( X,Z\right) %
\right] =d\left( P,U\right) .$ The couplings between $X$ and $Y$, and
between $X$ and $Z$ can be extended to a joint distribution on $X,Y$ and $Z$
such that $Y$ and $Z$ are independent given $X.$ For this joint distribution
we have 
\begin{equation*}
I\left( Z;Y\right) \leq I\left( X,Y\right)
\end{equation*}%
and 
\begin{equation*}
\left\vert E\left[ d\left( Z,Y\right) \right] -E\left[ d\left( X,Y\right) %
\right] \right\vert \leq f_{2}\left( d\left( P,U\right) \right) .
\end{equation*}%
We have to prove that 
\begin{equation*}
E\left[ d\left( X,Y\right) \right] \geq R_{U}^{-1}\left( I\left( X,Y\right)
\right) -f_{2}\left( d\left( P,U\right) \right)
\end{equation*}%
but $I\left( Z;Y\right) \leq I\left( X,Y\right) $ so it is sufficient to
prove that 
\begin{equation*}
E\left[ d\left( X,Y\right) \right] \geq R_{U}^{-1}\left( I\left( Z,Y\right)
\right) -f_{2}\left( d\left( P,U\right) \right)
\end{equation*}%
and this follows because $E\left[ d\left( Z,Y\right) \right] \geq
R_{U}^{-1}\left( I\left( Z,Y\right) \right) . $
\end{proof}

\section{Convergence of convolutions\label{SecConvergence}}

We shall prove that under certain conditions the $n$-fold convolutions $%
P^{\ast n}$ converge to the uniform distribution.

\begin{example}
The function 
\begin{equation*}
f\left( x\right) =1+\sum_{n=1}^{\infty }a_{n}\cos \left( n\left( x+\phi
_{n}\right) \right)
\end{equation*}%
is a density on a probability distribution $P$ on $G$ if the Fourier
coefficients $a_{n}$ are sufficiently small. If $\left( a_{n}\right) $ and $%
\left( b_{n}\right) $ are Fourier coefficients of $P$ and $Q$ then the
convolution has density%
\begin{multline*}
\frac{1}{2\pi }\int_{0}^{2\pi }\left( 1+\sum_{n=1}^{\infty }a_{n}\cos
n\left( x-y+\phi _{n}\right) \right) \left( 1+\sum_{n=1}^{\infty }b_{n}\cos
n\left( y+\psi _{n}\right) \right) ~dy \\
=1+\frac{1}{2\pi }\sum_{n=1}^{\infty }\int_{0}^{2\pi }a_{n}b_{n}\cos n\left(
x-y+\phi _{n}\right) \cos n\left( y+\psi _{n}\right) ~dy \\
=1+\frac{1}{2\pi }\sum_{n=1}^{\infty }\int_{0}^{2\pi }a_{n}b_{n}\cos \left(
n\left( x+\phi _{n}+\psi _{n}\right) -ny\right) \cos \left( ny\right) ~dy \\
=1+\frac{1}{2\pi }\sum_{n=1}^{\infty }\int_{0}^{2\pi }a_{n}b_{n}\left( 
\begin{array}{c}
\cos n\left( x+\phi _{n}+\psi _{n}\right) \cos \left( ny\right) \\ 
+\sin \left( n\left( x+\phi _{n}+\psi _{n}\right) \right) \sin \left(
ny\right)%
\end{array}%
\right) \cos \left( ny\right) ~dy \\
=1+\sum_{n=1}^{\infty }\frac{a_{n}b_{n}\cos \left( n\left( x+\phi _{n}+\psi
_{n}\right) \right) }{2\pi }\int_{0}^{2\pi }\cos ^{2}\left( ny\right) ~dy \\
=1+\sum_{n=1}^{\infty }\frac{a_{n}b_{n}\cos \left( n\left( x+\phi _{n}+\psi
_{n}\right) \right) }{2}.
\end{multline*}%
Therefore the $n$-fold convolution has density%
\begin{equation*}
1+\sum_{k=1}^{\infty }\frac{a_{k}^{n}\cos \left( k\left( x+n\phi _{k}\right)
\right) }{2^{n-1}}=1+\sum_{k=1}^{\infty }\left( \frac{a_{k}}{2}\right)
^{n}2\cos \left( k\left( x+n\phi _{k}\right) \right) .
\end{equation*}%
Therefore each of the Fourier coefficients is exponentially decreasing.
\end{example}

Clearly, if $P$ is uniform on a proper subgroup then convergence does not
hold. In several papers on this topic \cite[and references therein]%
{Johnson2000, Johnson04} it is claimed and \textquotedblleft
proved\textquotedblright\ that if convergence does not hold then the support
of $P$ is contained in the coset of a proper normal subgroup. The proofs
therefore contain errors that seem to have been copied from paper to paper.
To avoid this problem and make this paper more self-contained we shall
reformulate and reprove some already known theorems. \newline
In the theory of finite Markov chains is well-known that there exists an
invariant probability measure. Certain Markov chains exhibits periodic
behavior where a certain distribution is repeated after a number of
transitions. All distributions in such a cycle will lie at a fixed distance
from any (fixed) measure, where the distance is given by information
divergence or total variation (or any other Csisz{\'{a}}r $f$-divergence).
It is also well-known that finite Markov chains without periodic behavior
are convergent. In general a Markov chain will converge to a
\textquotedblleft cyclic\textquotedblright\ behavior as stated in the
following theorem \cite{Harremoes2009}.

\begin{theorem}
\label{main}Let $\Phi $ be a transition operator on a state space $A$ with
an invariant probability measure $Q_{in}.$ If $D\left( S\parallel Q\right)
<\infty $ then there exists a probability measure $P^{\ast }$ such that $%
D\left( \Phi ^{n}S\parallel \Phi ^{n}Q\right) \rightarrow 0$ and $D\left(
\Phi ^{n}Q\parallel Q_{in}\right) $ is constant.
\end{theorem}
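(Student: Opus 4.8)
The plan is to regard the iteration $P\mapsto\Phi P$ as a deterministic dynamical system on probability measures, with the information divergence $V(P):=D(P\Vert Q_{in})$ playing the role of a Lyapunov function, and to obtain the measure asserted to exist --- call it $P^{\ast}$ --- as a point of the $\omega$-limit set of the orbit $\left(\Phi^{n}S\right)_{n}$.

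First I would record monotonicity. Since $Q_{in}$ is invariant, $\Phi^{n}Q_{in}=Q_{in}$, and the data-processing inequality (information divergence never increases under a Markov kernel) shows that $n\mapsto D(\Phi^{n}S\Vert\Phi^{n}R)$ is non-increasing for every fixed $R$; with $R=Q_{in}$ this gives $a_{n}:=V(\Phi^{n}S)\downarrow c$ for some $c\geq0$. The increments $a_{n}-a_{n+1}\geq 0$ are summable, with sum $a_{0}-c$, where $a_{0}=D(S\Vert Q_{in})<\infty$ by hypothesis, so $a_{n}-a_{n+1}\to 0$. The point of this is that $a_{n}-a_{n+1}$ equals a non-negative information-theoretic quantity measuring how much the one-step transition reveals about the previous state; hence that quantity tends to $0$, which says that along the orbit the transition becomes asymptotically reversible and divergence-preserving.

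Next I would use compactness: divergence balls are compact in the information topology on probability measures (in the running example, where the state space is a compact metrizable group, this is just weak compactness, with $V$ lower semicontinuous), so the orbit has a non-empty, closed, $\Phi$-invariant $\omega$-limit set $\mathcal{C}\subseteq\{V\leq c\}$. The core of the argument --- and where the real work lies --- is to feed in the previous paragraph to show that $\Phi$ restricts to a homeomorphism of $\mathcal{C}$ that preserves $V$ (so $V\equiv c$ there) and whose iterates $\{\Phi^{n}|_{\mathcal{C}}\}$ form an equicontinuous family; the closure of that family is then a compact group, so along a subnet $\Phi^{n_{i}}|_{\mathcal{C}}\to\mathrm{id}$. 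Refining the subnet gives $\Phi^{n_{i}}S\to P^{\ast}$ for some $P^{\ast}\in\mathcal{C}$, and then $\Phi^{n_{i}}P^{\ast}\to P^{\ast}$ as well. Now $n\mapsto D(\Phi^{n}P^{\ast}\Vert Q_{in})$ is non-increasing and equal to $V(\Phi^{n}P^{\ast})=c$, hence constant: that is the second assertion. For the first, $d_{n}:=D(\Phi^{n}S\Vert\Phi^{n}P^{\ast})$ is non-increasing, so it suffices that $d_{n_{i}}\to 0$; writing the chain-rule identity $D(\mu\Vert Q_{in})=D(\mu\Vert\nu)+\int\log\tfrac{d\nu}{dQ_{in}}\,d\mu$ with $\mu=\Phi^{n_{i}}S$ and $\nu=\Phi^{n_{i}}P^{\ast}$, the left-hand side tends to $c$ and the integral tends to $\int\log\tfrac{dP^{\ast}}{dQ_{in}}\,dP^{\ast}=V(P^{\ast})=c$, so $d_{n_{i}}\to 0$, as needed.

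The hard part will be the structural step: upgrading the soft fact that the per-step divergence drop tends to $0$ into the rigid statement that $\mathcal{C}$ carries reversible, distal dynamics --- equivalently, that $\Phi^{n_{i}}$ returns to the identity on $\mathcal{C}$ along a subnet, and that $V$ is actually constant (equal to $c$) on $\mathcal{C}$ rather than merely bounded by $c$. This must be done through the equality case of data processing, not through the support of $P$; it is precisely where the widely copied ``support in a coset of a normal subgroup'' arguments break down, so it deserves the most care. A secondary, genuinely analytic nuisance is justifying the limit of the integral term above --- a Scheffé-type uniform-integrability estimate on the divergence ball --- and the absolute continuity $\Phi^{n}S\ll\Phi^{n}P^{\ast}$ that makes the chain rule legitimate; the latter is automatic when, for instance, $\Phi$ is convolution by a measure of full support, the case relevant in the next section.
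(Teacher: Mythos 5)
There is a genuine gap, and it sits exactly where you put it. Note first that the paper does not prove this theorem at all: it is imported verbatim from the reference \cite{Harremoes2009}, where the argument is purely measure-theoretic and is built on the compensation identity (the Proposition stated immediately after the theorem, attributed to Tops{\o}e) together with Cauchy-type estimates in divergence --- not on topological dynamics. So there is no internal proof to match your sketch against, and the published route is genuinely different from yours.

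As for your proposal itself: the soft part is fine (data processing gives $a_{n}\downarrow c$, the per-step drop tends to $0$ and equals the expected divergence between backward kernels, and the final deduction ``$d_{n}$ non-increasing plus $d_{n_{i}}\to 0$ implies $d_{n}\to 0$'' is valid). But everything that makes the theorem true is concentrated in the step you label ``the hard part'' and do not supply: upgrading ``the divergence drop vanishes asymptotically'' to ``the $\omega$-limit set carries distal, $V$-preserving dynamics with $\Phi^{n_{i}}|_{\mathcal{C}}\to\mathrm{id}$ along a subnet.'' That is essentially the entire content of the statement, so what you have is a plan with the key lemma missing, not a proof. Three subsidiary points would also need repair. (i) The compactness claim is wrong as stated: sublevel sets of $D(\cdot\Vert Q_{in})$ are compact in the \emph{weak} topology (via lower semicontinuity and uniform integrability of the densities), not in the information topology, and for the abstract state space $A$ of the theorem no topology is given at all, so the $\omega$-limit-set framework is not even available in the stated generality; you would additionally need weak continuity of $\Phi$ to make $\mathcal{C}$ invariant. (ii) Since $V$ is only lower semicontinuous, the LaSalle-type conclusion $V\equiv c$ on $\mathcal{C}$ does not follow from $\mathcal{C}\subseteq\{V\leq c\}$; this is not a formality, it is where one must rule out that the limit measure is strictly ``closer'' to $Q_{in}$ than the orbit ever gets. (iii) The closing chain-rule computation needs $\Phi^{n_{i}}S\ll\Phi^{n_{i}}P^{\ast}$ and convergence of $\int\log\frac{d\nu_{i}}{dQ_{in}}\,d\mu_{i}$ where both the integrand and the integrating measure vary with $i$ and the integrand is neither continuous nor bounded; lower semicontinuity of divergence gives the useless direction here, so this requires a genuine uniform-integrability argument. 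If you want a completable route, I would suggest replacing the topological-dynamics step by the compensation identity applied to mixtures $\tfrac{1}{2}\Phi^{n}S+\tfrac{1}{2}\Phi^{m}S$ pushed forward by $\Phi^{k}$, which yields a Cauchy property of the orbit directly in divergence and produces $P^{\ast}$ as a limit without ever needing $\Phi^{n_{i}}\to\mathrm{id}$.
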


We shall also use the following proposition that has a purely computational
proof \cite{Topsoe67}.

\begin{proposition}
Let $P_{x}, x\in X$ be distributions and let $Q$ be a probability
distribution on $X.$ Then 
\begin{equation*}
\int D\left( P_{x}\parallel Q\right) ~dQx =D\left( \int P_{x}dQx\parallel
Q\right) +\int D\left( P_{x}\parallel\int P_{t} ~dQt \right) ~dQx .
\end{equation*}
\end{proposition}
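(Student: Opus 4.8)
The plan is to reduce the identity to a pointwise Radon--Nikodym computation followed by one interchange of integrals. Write $\bar P:=\int P_x\,dQx$ for the mixture, so the claim becomes
\[
\int D\left(P_x\Vert Q\right)\,dQx \;=\; D\left(\bar P\Vert Q\right)\;+\;\int D\left(P_x\Vert\bar P\right)\,dQx .
\]
First I would dispose of the case where the left-hand side is $+\infty$: by convexity of $D(\cdot\Vert Q)$ in its first argument, $D(\bar P\Vert Q)\le\int D\left(P_x\Vert Q\right)\,dQx$, so after the density manipulation below one sees the right-hand side is then $+\infty$ as well. Hence I may assume $\int D\left(P_x\Vert Q\right)\,dQx<\infty$; in particular $P_x\ll Q$ for $Q$-almost every $x$ and $D(\bar P\Vert Q)<\infty$. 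Put $p_x:=dP_x/dQ$.

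Next I would record two facts about the mixture. By Tonelli's theorem $\bar P\ll Q$ with density $\bar p:=d\bar P/dQ=\int p_x\,dQx$; and since $\int P_x(\{\bar p=0\})\,dQx=\bar P(\{\bar p=0\})=0$, for $Q$-almost every $x$ we have $P_x(\{\bar p=0\})=0$, whence $P_x\ll\bar P$ with $dP_x/d\bar P=p_x/\bar p$. The pointwise step is then to split $\log p_x=\log(p_x/\bar p)+\log\bar p$ ($P_x$-almost everywhere), multiply by $p_x$ and integrate $dQ$, obtaining
\[
D\left(P_x\Vert Q\right)=D\left(P_x\Vert\bar P\right)+\int p_x\log\bar p\,dQ
\]
for $Q$-almost every $x$. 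Integrating over $x$ against $Q$ and exchanging the order of integration in the last term turns $\int\!\bigl(\int p_x\log\bar p\,dQ\bigr)\,dQx$ into $\int\!\bigl(\int p_x\,dQx\bigr)\log\bar p\,dQ=\int\bar p\log\bar p\,dQ=D\left(\bar P\Vert Q\right)$, which is exactly the term that was to be produced; combining gives the identity.

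The one delicate point --- and the only real obstacle --- is the integrability bookkeeping that legitimises the interchange of integrals and excludes any $\infty-\infty$. The key observations are that $t\mapsto t\log t\ge-e^{-1}$ and that $Q$ is a probability measure, so each of $p_x\log p_x$, $p_x\log(p_x/\bar p)$ and $p_x\log\bar p$ equals a nonnegative measurable function minus an integrable one; after the obvious constant shift Tonelli's theorem applies and every manipulation above is valid with values in $[-e^{-1},+\infty]$. The remaining ingredients --- the chain rule for Radon--Nikodym derivatives and the mixture identity $\int\bigl(\int f\,dP_x\bigr)\,dQx=\int f\,d\bar P$ --- are routine.
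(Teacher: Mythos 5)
The paper does not actually prove this proposition; it is quoted with the remark that it ``has a purely computational proof'' and a citation to Tops{\o}e's 1967 paper, so there is no in-paper argument to compare against. Your computation is exactly the standard one behind that citation --- split $\log p_x=\log\left(p_x/\bar p\right)+\log\bar p$, integrate against $P_x$, then against $Q$, and identify $\int\bar p\log\bar p\,dQ=D\left(\bar P\Vert Q\right)$ --- and it is correct, including the observation that $P_x\ll\bar P$ for $Q$-almost every $x$ and the $t\log t\ge-e^{-1}$ bound that makes Tonelli applicable. One small point: the convexity inequality $D\left(\bar P\Vert Q\right)\le\int D\left(P_x\Vert Q\right)dQx$ does nothing for the case where the left-hand side is infinite (an upper bound of $+\infty$ is vacuous); the clean way to dispose of that case is the contrapositive of your own computation --- if both terms on the right were finite, then $\int\left(\log\bar p\right)^{\pm}d\bar P$ would be finite, and the pointwise decomposition plus Tonelli would force the left-hand side to be finite. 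You essentially say this (``after the density manipulation below''), so I would only rephrase, not repair.
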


We denote the set of probability measures on $G$ by $M_{+}^{1}\left( G
\right)$.

\begin{theorem}
\label{konvergens}Let $P$ be a distribution on a compact group $G$ and
assume that the support of $P$ is not contained in any nontrivial coset of a
subgroup of $G.$ Then, if $D\left( S\Vert U\right) $ is finite then $D\left(
P^{\ast n}\ast S\Vert U\right) \rightarrow 0$ for $n\rightarrow \infty .$
\end{theorem}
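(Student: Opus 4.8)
The plan is to realize $\left( P^{\ast n}\ast S\right) _{n}$ as the orbit of a Markov chain and to apply Theorem \ref{main}, after which the problem reduces to showing that the associated limit cycle is $\left\{ U\right\} $. First I would take $\Phi $ to be the transition operator $\Phi R=P\ast R$ on probability measures on $G$ (the one-step map of the random walk $X_{n}=Y_{n}\ast X_{n-1}$ with $Y_{i}$ i.i.d.\ of law $P$), so that $\Phi ^{n}S=P^{\ast n}\ast S$. The uniform distribution is invariant: since $g\ast U=U$ for all $g\in G$ (Theorem \ref{Thm1}), $\Phi U=P\ast U=\int_{G}\left( g\ast U\right) dP\left( g\right) =U$. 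As $D\left( S\Vert U\right) $ is finite, Theorem \ref{main} then supplies a distribution $R_{0}$ with $D\left( \Phi ^{n}S\Vert \Phi ^{n}R_{0}\right) \rightarrow 0$ and with $D\left( \Phi ^{n}R_{0}\Vert U\right) $ equal to a constant $c=D\left( R_{0}\Vert U\right) $, which is finite. It then suffices to show $R_{0}=U$: this forces $c=0$, and since $\Phi ^{n}R_{0}=\Phi ^{n}U=U$ we obtain $D\left( P^{\ast n}\ast S\Vert U\right) =D\left( \Phi ^{n}S\Vert \Phi ^{n}R_{0}\right) \rightarrow 0$.

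To prove $R_{0}=U$, the main tool is the compensation identity. Writing $R=R_{0}$ and using translation invariance $D\left( g\ast R\Vert U\right) =D\left( g\ast R\Vert g\ast U\right) =D\left( R\Vert U\right) $, the compensation identity for information divergence (see \cite{Topsoe67}; the Proposition above is the special case where the mixing measure coincides with the reference), applied to the family $\left( g\ast R\right) _{g\in G}$ with mixing measure $P$ and reference $U$ and noting that the mixture is $P\ast R=\Phi R$, gives
\[
D\left( R\Vert U\right) =D\left( \Phi R\Vert U\right) +\int_{G}D\left( g\ast R\,\Vert\,\Phi R\right) \,dP\left( g\right) .
\]
Since $D\left( R\Vert U\right) =D\left( \Phi R\Vert U\right) =c<\infty $, the nonnegative integral vanishes, so $g\ast R=\Phi R$ for $P$-almost every $g$. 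As $g\mapsto g\ast R$ is weakly continuous, $\left\{ g:g\ast R=\Phi R\right\} $ is closed, hence contains $\operatorname{supp}\left( P\right) $. Fixing $a\in \operatorname{supp}\left( P\right) $, the identities $g\ast R=a\ast R$ yield $\left( a^{-1}\ast g\right) \ast R=R$ for every $g\in \operatorname{supp}\left( P\right) $, i.e. $\operatorname{supp}\left( P\right) \subseteq a\ast H$ where $H=\left\{ h\in G:h\ast R=R\right\} $ is a subgroup of $G$. By hypothesis $\operatorname{supp}\left( P\right) $ is contained in no nontrivial coset of a subgroup, so $H=G$; thus $R$ is a left Haar measure and $R=U$ by Theorem \ref{Thm1}.

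The step I expect to be the main obstacle is the interface with Theorem \ref{main}: one must be sure it delivers $R_{0}$ along whose orbit $D\left( \cdot \Vert U\right) $ is \emph{exactly} constant and, crucially, \emph{finite} (bounded by $D\left( S\Vert U\right) $), for only then is the subtraction in the compensation identity legitimate and does it force the per-step divergence to vanish \emph{exactly} rather than merely asymptotically; checking this may require inspecting the construction of $R_{0}$ (it arises as a weak limit of iterates of $S$, and $D\left( \cdot \Vert U\right) $ is lower semicontinuous). If instead one argues directly, monotonicity of $D\left( P^{\ast n}\ast S\Vert U\right) $ together with passage to weak limit points only yields weak convergence $P^{\ast n}\ast S\rightarrow U$; Theorem \ref{main} is exactly what upgrades this to convergence in information divergence. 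A minor point to verify en route is the closedness of $\left\{ g:g\ast R=\Phi R\right\} $, so that it genuinely contains $\operatorname{supp}\left( P\right) $.
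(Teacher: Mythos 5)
Your proposal is correct and its first half is essentially the paper's argument: you realize $P^{\ast n}\ast S$ as the orbit of the Markov operator $\Phi R=P\ast R$, invoke Theorem \ref{main} with the invariant measure $U$ to obtain a limit-cycle measure $R_{0}$ with $D(\Phi^{n}R_{0}\Vert U)$ constant, and then use the compensation identity (in its general form, with mixing measure $P$ distinct from the reference $U$ --- the paper's Proposition is stated only for coinciding mixing and reference measures, but the paper's own proof also uses the general form) together with translation invariance of $D(\cdot\Vert U)$ to conclude $g\ast R_{0}=P\ast R_{0}$ for $P$-almost every $g$. Where you genuinely diverge is the last step. The paper extracts a single $g_{0}$ with $g_{0}\ast R_{0}=P\ast R_{0}$, passes to $\tilde{P}=g_{0}^{-1}\ast P$, forms the Ces\`{a}ro means of $\tilde{P}^{\ast i}$, and appeals again to the ergodic theory of \cite{Harremoes2009} to produce an idempotent measure $T$ supported on a subgroup containing $\operatorname{supp}(\tilde{P})$, whence $T=U$ and $R_{0}=T\ast R_{0}=U$. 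You instead upgrade the almost-everywhere identity to all of $\operatorname{supp}(P)$ via weak continuity of $g\mapsto g\ast R_{0}$, and then observe directly that $\operatorname{supp}(P)\subseteq a\ast H$ for the stabilizer subgroup $H=\{h:h\ast R_{0}=R_{0}\}$, so the coset hypothesis forces $H=G$ and $R_{0}$ is a left Haar measure. Your route is more elementary (no second appeal to the ergodic mean or to idempotent measures) at the modest price of the closedness argument for $\{g:g\ast R_{0}=P\ast R_{0}\}$, which does go through since the distortion is right invariant, so $d(g_{n}\ast R_{0},g\ast R_{0})\leq d(g_{n},g)$. Your flagged concern about finiteness of $D(R_{0}\Vert U)$ is legitimate --- the paper silently cancels $D(Q)$ from both sides of an identity, which requires $D(Q)<\infty$ --- and your resolution via lower semicontinuity along the construction of the limit-cycle measure is the right one.
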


\begin{proof}
Let $\Psi :G\rightarrow M_{+}^{1}\left( G\right) $ denote the Markov kernel $%
\Psi \left( g\right) =P\ast g.$ Then $P^{\ast n}\ast S=\Psi ^{n}\left( P\ast
S\right) .$ Thus there exists a probability measure $Q$ on $G$ such that $%
D\left( \Psi ^{n}\left( P\right) \Vert \Psi ^{n}\left( Q\right) \right)
\rightarrow 0$ for $n\rightarrow \infty $ and such that $D\left( \Psi
^{n}\left( Q\right) \right) $ is constant. We shall prove that $Q=U.$

First we note that%
\begin{align*}
D\left( Q\right) & =D\left( P\ast Q\right) \\
& =\int_{G}\left( D\left( g\ast Q\right) -D\left( g\ast Q\Vert P\ast
Q\right) \right) ~dPg \\
& =D\left( Q\right) -\int_{G}D\left( g\ast Q\Vert P\ast Q\right) ~dPg\ .
\end{align*}%
Therefore $g\ast Q=P\ast Q$ for $P$ almost every $g\in G.$ Thus there exists
at least one $g_{0}\in G$ such that $g_{0}\ast Q=P\ast Q.$ Then $Q=\tilde{P}%
\ast Q$ where $\tilde{P}=g_{0}^{-1}\ast P.$ \newline
Let $\tilde{\Psi}:G\rightarrow M_{+}^{1}\left( G\right) $ denote the Markov
kernel $g\rightarrow \tilde{P}\ast g.$ Put%
\begin{equation*}
P_{n}=\frac{1}{n}\sum_{i=1}^{n}\tilde{P}^{\ast i}=\frac{1}{n}\sum_{i=1}^{n}%
\tilde{\Psi}^{i-1}\left( \tilde{P}\right) .
\end{equation*}%
According to \cite{Harremoes2009} this ergodic mean will converge to a
distribution $T$ such that $\tilde{\Psi}\left( T\right) =T$ so that $T\ast 
\tilde{P}=T.$ Hence we also have that $T\ast T=T,$ i.e. $T$ is idempotent
and therefore supported by a subgroup of $G$. We know that $\tilde{P}$ is
not contained in any nontrivial subgroup of $G$ so the support of $T$ must
be $G$. We also get $Q=T\ast Q,$ which together with Theorem \ref{Thm1}
implies that $Q=U.$
\end{proof}

by choosing $S=P$ we get the following corollary.

\begin{corollary}
\label{divkonv}Let $P$ be a probability measure on the compact group $G$
with Haar probability measure $U$. Assume that the support of $P$ is not
contained in any coset of a proper subgroup of $G$ and $D\left( P\Vert
U\right) $ is finite. Then $D\left( P^{\ast n}\Vert U\right) \rightarrow 0$
for $n\rightarrow \infty $.
\end{corollary}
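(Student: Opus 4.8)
The plan is to read off Corollary~\ref{divkonv} as the special case $S=P$ of Theorem~\ref{konvergens}. First I would reconcile the two hypotheses: the assumption here that the support of $P$ is not contained in any coset of a \emph{proper} subgroup of $G$ is equivalent to the hypothesis of Theorem~\ref{konvergens} that it is not contained in any \emph{nontrivial} coset of a subgroup, because a coset $gF$ is a proper subset of $G$ exactly when $F\neq G$. So the structural hypothesis of Theorem~\ref{konvergens} holds, and its finiteness hypothesis --- applied with the generic measure $S$ taken to be $P$ --- is precisely the standing assumption $D\left(P\Vert U\right)<\infty$ of the corollary.

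Then I would invoke Theorem~\ref{konvergens} with $S=P$ to conclude $D\left(P^{\ast n}\ast P\Vert U\right)\rightarrow 0$ as $n\rightarrow\infty$. Since convolution of probability measures on the group $G$ is associative, $P^{\ast n}\ast P=P^{\ast\left(n+1\right)}$, hence $D\left(P^{\ast\left(n+1\right)}\Vert U\right)\rightarrow 0$; reindexing by $m=n+1$ gives $D\left(P^{\ast m}\Vert U\right)\rightarrow 0$ as $m\rightarrow\infty$, which is the assertion.

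Since this is a direct specialization, there is essentially no obstacle: the only points needing care are that the coset terminology in the two statements really does match and that $D\left(P\Vert U\right)<\infty$ is exactly the input needed to take $S=P$ in Theorem~\ref{konvergens}. (One could also argue more self-containedly by noting that $D\left(P^{\ast n}\Vert U\right)$ is non-increasing in $n$, being the divergence along the Markov chain with kernel $g\mapsto P\ast g$ and invariant measure $U$, so that it suffices to exclude a positive limit; but the one-line reduction above already does the job.)
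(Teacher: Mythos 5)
Your proposal is correct and is exactly the paper's argument: the paper obtains Corollary~\ref{divkonv} by taking $S=P$ in Theorem~\ref{konvergens}, so that $D\left(P^{\ast n}\ast P\Vert U\right)=D\left(P^{\ast\left(n+1\right)}\Vert U\right)\rightarrow 0$. Your additional care about matching the coset hypotheses and the reindexing is a fine (and slightly more explicit) write-up of the same one-line reduction.
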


Corollary \ref{divkonv} together with Theorem \ref{RateDistThm} implies the
following result.

\begin{corollary}
Let $P$ be a probability measure on the compact group $G$ with Haar
probability measure $U$. Assume that the support of $P$ is not contained in
any coset of a proper subgroup of $G$ and $D\left( P\Vert U\right) $ is
finite. Then the rate distortion function of $P^{\ast n}$ converges
uniformly to the rate distortion function of the uniform distribution.
\end{corollary}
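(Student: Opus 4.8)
The plan is to obtain the statement as an immediate consequence of the two-sided bound (\ref{opned}) of Theorem \ref{RateDistThm}. That bound holds for an \emph{arbitrary} distribution on $G$, so I would apply it with $P^{\ast n}$ in the role of $P$: for every distortion level $\delta$ one gets
\[
R_{U}(\delta)-D(P^{\ast n}\Vert U)\le R_{P^{\ast n}}(\delta)\le R_{U}(\delta).
\]
The decisive point is that the error term $D(P^{\ast n}\Vert U)$ on the left does not depend on $\delta$; hence
\[
0\le R_{U}(\delta)-R_{P^{\ast n}}(\delta)\le D(P^{\ast n}\Vert U)\qquad\text{for all }\delta ,
\]
i.e. $\sup_{\delta}\bigl|R_{P^{\ast n}}(\delta)-R_{U}(\delta)\bigr|\le D(P^{\ast n}\Vert U)$. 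On $\delta>0$ both functions are finite — by compactness a finite cover of $G$ by $\delta$-balls bounds $R_{U}(\delta)$, and then $R_{P^{\ast n}}(\delta)\le R_{U}(\delta)$ — so the difference is unambiguous there, which is the range over which the uniform convergence is asserted.

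Next I would invoke Corollary \ref{divkonv}: the hypotheses imposed on $P$ here — that its support is not contained in any coset of a proper subgroup and that $D(P\Vert U)$ is finite — are exactly the hypotheses of that corollary, so $D(P^{\ast n}\Vert U)\to 0$ as $n\to\infty$. Feeding this into the uniform estimate above yields $\sup_{\delta}\bigl|R_{P^{\ast n}}(\delta)-R_{U}(\delta)\bigr|\to 0$, which is precisely the claimed uniform convergence of the rate distortion functions.

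I do not expect any genuine obstacle: the entire analytic content has already been isolated in Theorem \ref{RateDistThm} (which makes the approximation error a single number, uniform over distortion levels) and in Corollary \ref{divkonv} (which sends that number to zero). The only remark worth making explicitly is that no separate argument is needed near the left endpoint $\delta=0$, where the rate distortion functions may be large or infinite: the bound (\ref{opned}) already controls the gap uniformly, and in fact shows the approach is monotone from below, $R_{P^{\ast n}}\le R_{U}$ for every $n$.
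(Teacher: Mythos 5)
Your argument is exactly the one the paper intends: it deduces the corollary by combining the uniform two-sided bound (\ref{opned}) of Theorem \ref{RateDistThm}, applied to $P^{\ast n}$, with the convergence $D(P^{\ast n}\Vert U)\rightarrow 0$ from Corollary \ref{divkonv}. The proposal is correct and matches the paper's route, with the additional (welcome) explicit observation that the error bound is independent of the distortion level.
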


We also get weak versions of these results.

\begin{corollary}
\label{dweakkonv}Let $P$ be a probability measure on the compact group $G$
with Haar probability measure $U.$ Assume that the support of $P$ is not
contained in any coset of a proper subgroup of $G.$ Then $P^{\ast n}$
converges to $U$ in the weak topology, i.e. $d\left( P^{\ast n},U\right)
\rightarrow 0$ for $n\rightarrow \infty .$
\end{corollary}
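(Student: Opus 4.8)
The plan is to reduce the statement to Theorem~\ref{konvergens} by an approximate-identity argument that dispenses with the assumption $D(P\Vert U)<\infty$. What has to be shown is exactly that $d(P^{\ast n},U)\to 0$. Since $d$ is bounded by $d_{\max}$, and the estimate $\lvert d(P,Q)-d(S,Q)\rvert\le f_{2}(d(S,P))$ holds on $M_{+}^{1}(G)$ with $f_{2}$ continuous, $f_{2}(0)=0$ (and, replacing $f_{2}$ by $t\mapsto\sup_{s\le t}f_{2}(s)$ if needed, non-decreasing), it suffices to fix $\epsilon>0$ and prove $\limsup_{n}d(P^{\ast n},U)\le\epsilon+f_{2}(\epsilon)$; letting $\epsilon\downarrow 0$ then gives the claim.

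The first step is to build a probability measure $S_{\epsilon}$ supported near $e$ with $D(S_{\epsilon}\Vert U)<\infty$. The function $(a,y)\mapsto d(a\ast y,a)$ is continuous on the compact space $G\times G$ and vanishes on $G\times\{e\}$, so a standard compactness (uniform continuity) argument produces an open neighbourhood $V$ of $e$ with $d(a\ast y,a)<\epsilon$ for every $a\in G$ and every $y\in V$. Since $V$ is nonempty and open and the Haar measure $U$ has full support, $U(V)>0$; let $S_{\epsilon}$ be the normalized restriction of $U$ to $V$. Its density $\mathbf 1_{V}/U(V)$ is bounded, so $D(S_{\epsilon}\Vert U)=-\log U(V)<\infty$. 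Two properties of $S_{\epsilon}$ are then used. (i) If $X\sim P^{\ast n}$ and $Y\sim S_{\epsilon}$ are independent, then $(XY,X)$ is a coupling of $P^{\ast n}\ast S_{\epsilon}$ and $P^{\ast n}$ with $d(XY,X)<\epsilon$ almost surely (for the realized value $X=a$ one has $Y\in V$, hence $d(a\ast Y,a)<\epsilon$); therefore $d(P^{\ast n}\ast S_{\epsilon},P^{\ast n})\le\epsilon$ for all $n$. (ii) Because the support of $P$ lies in no coset of a proper subgroup and $D(S_{\epsilon}\Vert U)<\infty$, Theorem~\ref{konvergens} (with $S_{\epsilon}$ in the role of $S$) gives $D(P^{\ast n}\ast S_{\epsilon}\Vert U)\to 0$; by Pinsker's inequality this forces convergence in total variation, and since $d\le d_{\max}$ it gives $d(P^{\ast n}\ast S_{\epsilon},U)\to 0$.

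Combining (i) and (ii) with the estimate for $d$ on $M_{+}^{1}(G)$ then yields, for all sufficiently large $n$,
\[
 d(P^{\ast n},U)\;\le\;d(P^{\ast n}\ast S_{\epsilon},U)+f_{2}\!\bigl(d(P^{\ast n}\ast S_{\epsilon},P^{\ast n})\bigr)\;\le\;\epsilon+f_{2}(\epsilon),
\]
which is the desired bound on the $\limsup$.

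The main obstacle is step (i): showing that right-convolution by a measure concentrated near $e$ displaces $P^{\ast n}$ only slightly in the distortion distance. This is where some work is genuinely needed, because $d$ is assumed only \emph{right}-invariant, so there is no free bound $d(a\ast y,a)\le d(y,e)$; one really must use compactness of $G$ to get a neighbourhood $V$ of $e$ on which $d(a\ast y,a)<\epsilon$ holds uniformly in $a$. (I place the mollifier on the right so that Theorem~\ref{konvergens}, which convolves $S$ on the right, applies verbatim; placing it on the left would instead require the obvious left--right mirror of Theorem~\ref{konvergens}, legitimate since both $U$ and the hypothesis on the support of $P$ are invariant under $g\mapsto g^{-1}$.) The remaining ingredients --- existence of a finite-divergence $S_{\epsilon}$ from the full support of the Haar measure, the divergence convergence, and the concluding quasi-triangle step --- are routine given the results already established above.
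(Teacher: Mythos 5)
Your proof is correct and follows essentially the same route as the paper's: convolve $P^{\ast n}$ on the right with a finite-divergence mollifier concentrated near $e$, invoke Theorem~\ref{konvergens} plus Pinsker to get $d(P^{\ast n}\ast S_{\epsilon},U)\to 0$, and finish with the quasi-triangle estimate via $f_{2}$; the only substantive difference is your choice of mollifier (normalized Haar restriction to a small neighbourhood $V$ of $e$ rather than the Gibbs family $P_{\beta}$). Your compactness argument giving the uniform bound $d(a\ast y,a)<\epsilon$ for $y\in V$ is a genuine refinement worth keeping: the paper's corresponding step, $d\left( P^{\ast n}\ast P_{\beta},P^{\ast n}\right) \leq d\left( P_{\beta },e\right)$, tacitly treats the displacement $d(XY,X)$ as if it equalled $d(Y,e)$, which for a merely \emph{right}-invariant distortion function gives $d(XY,X)=d(XYX^{-1},e)$ and so requires exactly the uniformity-over-conjugation argument you supply.
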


\begin{proof}
If we take $S=P_{\beta }$ then $D\left( P_{\beta }\right) $ is finite and $%
D\left( P^{\ast n}\ast P_{\beta }\Vert U\right) \rightarrow 0$ for $%
n\rightarrow \infty $. We have 
\begin{eqnarray*}
d\left( P^{\ast n}\ast P_{\beta },U\right) &\leq &d_{\max }\left\Vert
P^{\ast n}\ast P_{\beta }-U\right\Vert \\
&\leq &d_{\max }\left( 2D\left( P^{\ast n}\ast P_{\beta }\Vert U\right)
\right) ^{1/2}
\end{eqnarray*}%
implying that $d\left( P^{\ast n}\ast P_{\beta },U\right) \rightarrow 0$ for 
$n\rightarrow \infty $. Now 
\begin{eqnarray*}
\left\vert d\left( P^{\ast n},U\right) -d\left( P^{\ast n}\ast P_{\beta
},U\right) \right\vert &\leq &f_{2}\left( d\left( P^{\ast n}\ast P_{\beta
},P^{\ast n}\right) \right) \\
&\leq &f_{2}\left( d\left( P_{\beta },e\right) \right) .
\end{eqnarray*}%
Therefore $\lim_{n\rightarrow \infty }\sup d\left( P^{\ast n},U\right) \leq
f_{2}\left( d\left( P_{\beta },e\right) \right) $ for all $\beta $, which
implies that 
\begin{equation*}
\lim_{n\rightarrow \infty }\sup d\left( P^{\ast n},U\right) =0.\qedhere
\end{equation*}
\end{proof}

\begin{corollary}
\label{pointwisekonv}Let $P$ be a probability measure on the compact group $%
G $ with Haar probability measure $U.$ Assume that the support of $P$ is not
contained in any coset of a proper subgroup of $G$ and $D\left( P\Vert
U\right) $ is finite. Then $R_{P^{\ast n}}$ converges to $R_{U}$ pointwise
on the interval $\left] 0;d_{\max }\right[ $ for $n\rightarrow \infty .$
\end{corollary}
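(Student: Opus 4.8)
The plan is to deduce Corollary \ref{pointwisekonv} from the already-established weak convergence $d(P^{\ast n},U)\to 0$ (Corollary \ref{dweakkonv}) together with the comparison inequalities for rate distortion functions proved in Theorem \ref{inverse} and Theorem \ref{RateDistThm}. The point is that pointwise convergence of $R_{P^{\ast n}}$ to $R_U$ is the ``weak'' counterpart of the uniform convergence obtained from divergence convergence, and it should follow from weak convergence alone; the hypothesis $D(P\Vert U)<\infty$ is only needed to keep the notation uniform with the other corollaries and to ensure $R_{P^{\ast n}}$ is eventually comparable.

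First I would fix a distortion level $\delta$ in the open interval $\left]0;d_{\max}\right[$. By Theorem \ref{RateDistThm} we always have the upper bound $R_{P^{\ast n}}(\delta)\le R_U(\delta)$, so only a matching lower bound is needed. For $\delta$ in the decreasing range $\left]0;d_{crit}\right]$ I would invert and use Theorem \ref{inverse}: applying it with $P$ replaced by $P^{\ast n}$ gives
\begin{equation*}
R_{U}^{-1}(r)-f_{2}\!\left(d\left(P^{\ast n},U\right)\right)\le R_{P^{\ast n}}^{-1}(r)\le R_{U}^{-1}(r)
\end{equation*}
for all $r\le d_{crit}$, where $f_2$ is continuous with $f_2(0)=0$. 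Since $d(P^{\ast n},U)\to 0$ by Corollary \ref{dweakkonv}, the left and right sides converge to $R_U^{-1}(r)$, so $R_{P^{\ast n}}^{-1}\to R_U^{-1}$ pointwise on $\left]0;d_{crit}\right]$; because these are monotone inverses of continuous strictly decreasing functions on that interval, this is equivalent to $R_{P^{\ast n}}(\delta)\to R_U(\delta)$ for each $\delta\in\left]0;d_{crit}\right]$. For $\delta\in\left[d_{crit};d_{\max}\right[$ one argues directly: $R_U(\delta)$ is attained by a translation-invariant Markov kernel (the independent coupling between $X$ and a uniform $\hat X$ with $Y\sim P_\beta$ for the appropriate $\beta$), and feeding $P^{\ast n}$ through this same kernel yields, by Corollary \ref{dweakkonv}, a coupling whose distortion tends to $\delta$ and whose mutual information tends to $R_U(\delta)=R_U^\ast$'s boundary value; combined with the universal upper bound this pins down the limit. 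Alternatively, since $R_U$ is constant equal to its minimum on $\left[d_{crit};d_{\max}\right]$ up to the value $R_U(d_{crit})$... actually $R_U$ is $0$-valued only at $d_{crit}$ onward is false, so I would instead simply note $R_U$ is continuous and use convexity of $R_{P^{\ast n}}$ together with the established convergence at $d_{crit}$ and at any $\delta<d_{crit}$ to squeeze the values on $\left[d_{crit};d_{\max}\right[$.

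The main obstacle I anticipate is the range $\delta\in\left[d_{crit};d_{\max}\right[$: there the distortion--rate function is not the inverse of $R_{P^{\ast n}}$ (the rate distortion function is no longer strictly decreasing), so Theorem \ref{inverse} does not apply verbatim and one must argue via the explicit optimal kernel of Theorem \ref{RateDistThm} or via convexity of the rate distortion function. I would handle this by exploiting that the optimal test channel for $U$ at level $\delta$ is a fixed group-translation-invariant Markov kernel $\Psi$; applying $\Psi$ to $P^{\ast n}$ gives an achievable $(\,$distortion$,\,$rate$)$ pair lying on $R_U$'s curve up to an error controlled by $d(P^{\ast n}\ast P_\beta, P^{\ast n})\le f_2(d(P_\beta,e))$ and by $d(P^{\ast n},U)\to 0$, exactly as in the proof of Corollary \ref{dweakkonv}. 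Letting $n\to\infty$ and then shrinking the auxiliary smoothing parameter $\beta$ yields $\liminf_n R_{P^{\ast n}}(\delta)\ge R_U(\delta)$, which together with $R_{P^{\ast n}}\le R_U$ completes the proof.
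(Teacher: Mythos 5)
Your proposal is correct and follows essentially the same route as the paper: it combines Corollary \ref{dweakkonv} with Theorem \ref{inverse} to get (uniform) convergence of the distortion--rate functions for distortions below $d_{crit}$, converts this into pointwise convergence of the rate distortion functions using monotonicity/convexity, and treats the remaining distortion levels separately, just as the paper does. One small remark: your hesitation over the range $\left[d_{crit};d_{\max}\right[$ is unnecessary, since there $R_{U}$ vanishes (an independent coupling already achieves distortion $d_{crit}$ at zero rate) and $0\leq R_{P^{\ast n}}\leq R_{U}$ by Theorem \ref{RateDistThm}, so the convergence is trivial on that interval.
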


\begin{proof}
Corollary \ref{dweakkonv} together with Theorem \ref{inverse} implies
uniform convergence of the distortion rate function for distortion less than 
$d_{crit}$. This implies pointwise convergence of the rate distortion
function on $\left] 0;d_{crit}\right[ $ because rate distortion functions
are convex functions. The same argument works in the interval $\left]
d_{crit};d_{\max }\right[ .$ Pointwise convergence in $d_{crit}$ must also
hold because of continuity.
\end{proof}

\section{Rate of convergence\label{SecRateConv}}

Normally the rate of convergence will be exponential. If the density is
lower bounded this is well-known. We bring a simplified proof of this.

\begin{lemma}
\label{lower}Let $P$ be a probability distribution on the compact group $G$
with Haar probability measure $U.$ If $dP/dU\geq c > 0 $ and $D\left(
P\right) $ is finite, then%
\begin{equation*}
D\left( P^{^n}\right) \leq\left( 1-c\right) ^{n-1}D\left( P\right) .
\end{equation*}
\end{lemma}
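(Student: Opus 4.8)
The plan is to exploit a Doeblin-type minorization of $P$ by $U$. Since $dP/dU\ge c$ and $\int (dP/dU)\,dU = 1 = U(G)$, we have $c\le 1$; if $c=1$ then $P=U$ and the claim is trivial, so assume $c<1$ and define a probability measure $R$ by $dR/dU = (dP/dU - c)/(1-c)$. This is non-negative because $dP/dU\ge c$, and it integrates to $1$ because $\int(dP/dU - c)\,dU = 1-c$. Thus $P = cU + (1-c)R$. The second ingredient is that $U$ is absorbing for convolution: by Theorem~\ref{Thm1}, $U\ast Q = Q\ast U = U$ for every $Q\in M_+^1(G)$ (this is where it matters that $U$ is a \emph{two-sided} Haar measure).

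Now split off one factor and use distributivity of convolution over mixtures:
\[
P^{\ast n} = P\ast P^{\ast(n-1)} = c\,(U\ast P^{\ast(n-1)}) + (1-c)\,(R\ast P^{\ast(n-1)}) = cU + (1-c)\,R\ast P^{\ast(n-1)}.
\]
Since information divergence is convex in its first argument and $D(U\Vert U)=0$,
\[
D\!\left(P^{\ast n}\right)\le c\cdot 0 + (1-c)\,D\!\left(R\ast P^{\ast(n-1)}\right) = (1-c)\,D\!\left(R\ast P^{\ast(n-1)}\right).
\]
It then remains to show that left convolution by $R$ does not increase divergence from $U$, i.e. $D(R\ast Q)\le D(Q)$ for all $Q$. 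Writing $R\ast Q = \int_G (g\ast Q)\,dRg$ and using convexity of divergence again, $D(R\ast Q\Vert U)\le \int_G D(g\ast Q\Vert U)\,dRg$; and each integrand equals $D(Q\Vert U)$ because $g\ast U = U$ (left invariance) and information divergence is unchanged under the common bijection $x\mapsto g\ast x$. Hence $D\!\left(P^{\ast n}\right)\le (1-c)\,D\!\left(P^{\ast(n-1)}\right)$, and iterating (base case $n=1$ trivial, with finiteness of every $D(P^{\ast k})$ inherited from $D(P)<\infty$ by monotonicity) gives $D\!\left(P^{\ast n}\right)\le (1-c)^{n-1}D(P)$.

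The only point requiring care is the interaction of the decomposition with non-commutativity of convolution: one must keep $R\ast P^{\ast(n-1)}$ (a mixture of \emph{left} translates $g\ast Q$) on the side where left invariance of $U$ applies, and invoke that $U$ absorbs from either side. There are no analytic obstacles, since all divergences appearing in the argument are finite and the only inequalities used are convexity of $D(\cdot\Vert U)$ and its invariance under group translations.
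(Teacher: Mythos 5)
Your proof is correct and follows essentially the same route as the paper: the Doeblin decomposition $P=cU+(1-c)R$, convexity of $D(\cdot\Vert U)$, and the fact that convolving with a fixed measure does not increase divergence from $U$. You merely peel off the left factor where the paper peels off the right one, and you spell out the translation-invariance step that the paper leaves implicit.
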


\begin{proof}
First we write%
\begin{equation*}
P=\left( 1-c\right) \cdot S+c\cdot U
\end{equation*}%
where $S$ denotes the probability measure

\begin{equation*}
S=\frac{P-cU}{1-c}.
\end{equation*}

For any distribution $Q$ on $G$ we have%
\begin{align*}
D\left( Q\ast P\right) & =D\left( \left( 1-c\right) \cdot Q\ast S+c\cdot
Q\ast U\right) \\
& \leq \left( 1-c\right) \cdot D\left( Q\ast S\right) +c\cdot D\left( Q\ast
U\right) \\
& \leq \left( 1-c\right) \cdot D\left( Q\right) +c\cdot D\left( U\right) \\
& =\left( 1-c\right) \cdot D\left( Q\right) .
\end{align*}%
Here we have used convexity of divergence.
\end{proof}

If a distribution $P$ has support in a proper subgroup $F$ then%
\begin{align*}
D\left( P\right) & \geq D\left( U_{F}\right) \\
& =\log \left( \left[ G:F\right] \right) \\
& \geq \log \left( 2\right) =\text{1 bit}.
\end{align*}%
Therefore $D\left( P\right) <1$ bit implies that $P$ cannot be supported by
a proper subgroup, but it implies more.

\begin{proposition}
\label{1bit} If $P$ is a distribution on the compact group $G$ and $D\left(
P\right) <1$\textrm{\ bit} then $\frac{d\left( P\ast P\right) }{dU}$ is
lower bounded by a positive constant.
\end{proposition}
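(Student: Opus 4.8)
The plan is to work directly with the density $f=dP/dU$, which exists since $D(P)<\infty$, and to exhibit an explicit positive lower bound for $d(P\ast P)/dU$. The crucial first step is a \emph{quantitative} reading of the hypothesis $D(P)<1$ bit. Let $E=\{g\in G:f(g)>0\}$ be the essential support of $P$ and put $\mu=U(E)$. Among all distributions supported on $E$, the conditional uniform distribution $U(\cdot\cap E)/\mu$ minimizes divergence to $U$; concretely, applying Jensen's inequality to the convex function $t\mapsto t\log t$ and the probability measure $U|_E/\mu$ gives
\begin{equation*}
D(P)=\int_E f\log f\,dU \;\ge\; \log\frac{1}{U(E)},
\end{equation*}
so $D(P)<\log 2$ forces $\mu>\tfrac12$. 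Since the sublevel complements $A_c:=\{f\ge c\}$ increase to $E$ as $c\downarrow 0$, we may fix once and for all a constant $c>0$ with $U(A_c)>\tfrac12$.

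Next I would record two invariance properties of the Haar probability measure that follow from its uniqueness (Theorem \ref{Thm1}): $U$ is invariant under right translations $x\mapsto x\ast h$, and it is invariant under inversion $x\mapsto x^{-1}$ (the push-forward of $U$ under inversion is again a left Haar probability measure, hence equals $U$). In particular $U(B\ast h)=U(B)$ and $U(B^{-1})=U(B)$ for every measurable $B$ and every $h\in G$. Using right-invariance to change variables in the double integral defining convolution, one obtains the everywhere-defined representative
\begin{equation*}
\frac{d(P\ast P)}{dU}(x)=\int_G f(x\ast y^{-1})\,f(y)\,dU(y).
\end{equation*}

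Finally I would bound this integrand below pointwise in $x$. From $f(y)\ge c\,\mathbf{1}_{A_c}(y)$ and $f(x\ast y^{-1})\ge c\,\mathbf{1}_{A_c}(x\ast y^{-1})$, together with the equivalence $x\ast y^{-1}\in A_c\iff y\in A_c^{-1}\ast x$, we get
\begin{equation*}
\frac{d(P\ast P)}{dU}(x)\;\ge\;c^{2}\,U\bigl(A_c\cap(A_c^{-1}\ast x)\bigr)\;\ge\;c^{2}\bigl(U(A_c)+U(A_c^{-1}\ast x)-1\bigr)\;=\;c^{2}\bigl(2\,U(A_c)-1\bigr),
\end{equation*}
where the middle inequality is the elementary bound $U(B\cap C)\ge U(B)+U(C)-1$ and the last equality uses the two invariances above. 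By the choice of $c$ the quantity $2U(A_c)-1$ is strictly positive, which is exactly the asserted positive lower bound.

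The point of the argument, rather than any genuine technical obstacle, is the exact matching of thresholds: the bound $D(P)<\log 2$ is precisely what is needed to push the essential support of $P$ past $U$-measure $\tfrac12$, and $\tfrac12$ is precisely the threshold at which a set together with a translate of its inverse must cover $G$ with a quantitative margin. Everything else is routine manipulation of the Haar-measure invariances and of the convolution density formula, so the only care needed is in justifying the Jensen step and the change of variables.
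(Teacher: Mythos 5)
Your proof is correct and follows essentially the same route as the paper's: threshold the density at a level $c$ whose superlevel set has $U$-measure exceeding $\tfrac{1}{2}$, then apply the elementary bound $U(B\cap C)\ge U(B)+U(C)-1$ to the convolution integral. You additionally supply the Jensen-inequality justification of the step $D(P)<\log 2\Rightarrow U\{f>0\}>\tfrac{1}{2}$, which the paper merely asserts, and you handle the non-abelian bookkeeping explicitly via right- and inversion-invariance of $U$ where the paper informally writes $y-x$.
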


\begin{proof}
The condition $D\left( P\right) <1$\textrm{\ bit} implies that $U\left\{ 
\frac{dP}{dU}>0\right\} >1/2.$ Hence there exists $\varepsilon>0$ such that $%
U\left\{ \frac{dP}{dU}>\varepsilon\right\} >1/2.$ We have 
\begin{align*}
\frac{d\left( P\ast P\right) }{dU}\left( y\right) & =\int_{G}\frac {dP}{dU}%
\left( x\right) \cdot\frac{dP}{dU}\left( y-x\right) ~dUx \\
& \geq\int_{\left\{ \frac{dP}{dU}>\varepsilon\right\} }\varepsilon\cdot 
\frac{dP}{dU}\left( y-x\right) ~dUx \\
& \geq\varepsilon\cdot\int_{\left\{ \frac{dP}{dU}\left( x\right)
>\varepsilon\right\} \cap\left\{ \frac{dP}{dU}\left( y-x\right)
>\varepsilon\right\} }\varepsilon~dUx \\
& =\varepsilon^{2}\cdot U\left( \left\{ \frac{dP}{dU}\left( x\right)
>\varepsilon\right\} \cap\left\{ \frac{dP}{dU}\left( y-x\right)
>\varepsilon\right\} \right) .
\end{align*}
Using the inclusion-exclusion inequalities we get%
\begin{multline*}
U\left( \left\{ \frac{dP}{dU}\left( x\right) >\varepsilon\right\}
\cap\left\{ \frac{dP}{dU}\left( y-x\right) >\varepsilon\right\} \right) \\
=U\left\{ \frac{dP}{dU}\left( x\right) >\varepsilon\right\} +U\left\{ \frac{%
dP}{dU}\left( y-x\right) >\varepsilon\right\}-U\left( \left\{ \frac{dP}{dU}%
\left( x\right) >\varepsilon\right\} \cup\left\{ \frac{dP}{dU}\left(
y-x\right) >\varepsilon\right\} \right) \\
\geq 2\cdot U\left\{ \frac{dP}{dU}\left( x\right) >\varepsilon\right\} -1.
\end{multline*}
Hence 
\begin{equation*}
\frac{d\left( P\ast P\right) }{dU}\left( y\right) \geq2\varepsilon
^{2}\left( U\left\{ \frac{dP}{dU}\left( x\right) >\varepsilon\right\}
-1/2\right)
\end{equation*}
for all $y\in G.$
\end{proof}

Combining Theorem \ref{konvergens}, Lemma \ref{lower}, and Proposition \ref%
{1bit} we get the following result.

\begin{theorem}
Let $P$ be a probability measure on a compact group $G$ with Haar
probability measure $U.$ If the support of $P$ is not contained in any coset
of a proper subgroup of $G$ and $D\left( P\right \Vert U) $ is finite then
the rate of convergence of $D\left( P^{\ast n}\right \Vert U) $ to zero is
exponential.
\end{theorem}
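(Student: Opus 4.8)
The plan is to combine Corollary~\ref{divkonv}, which already gives $D\left( P^{\ast n}\right) \rightarrow 0$ under the stated hypotheses, with the quantitative Lemma~\ref{lower} and Proposition~\ref{1bit}; the glue is the monotonicity of $n\mapsto D\left( P^{\ast n}\right) $. I would first record that monotonicity: for any probability measures $Q,P$ on $G$ the map $Q\mapsto Q\ast P$ is a Markov operation and $U\ast P=U$, so the data-processing inequality yields
\[
D\left( Q\ast P\right) =D\left( Q\ast P\Vert U\ast P\right) \leq D\left( Q\Vert U\right) =D\left( Q\right) ,
\]
and taking $Q=P^{\ast n}$ shows $D\left( P^{\ast (n+1)}\right) \leq D\left( P^{\ast n}\right) $. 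The same inequality can be obtained from convexity of divergence together with translation invariance of $U$, exactly as in the proof of Lemma~\ref{lower}.

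Next I would use Corollary~\ref{divkonv} to fix an index $n_{0}$ with $D\left( P^{\ast n_{0}}\right) <1$ bit, and apply Proposition~\ref{1bit} to the distribution $P^{\ast n_{0}}$: its convolution square is $P^{\ast 2n_{0}}$, so $\frac{dP^{\ast 2n_{0}}}{dU}\geq c$ for some constant $c>0$. One necessarily has $c\leq 1$, and if $c=1$ then $P^{\ast 2n_{0}}=U$ and the assertion is trivial, so I may assume $0<c<1$. Monotonicity gives $D\left( P^{\ast 2n_{0}}\right) \leq D\left( P\right) <\infty $, so Lemma~\ref{lower} applies with $P^{\ast 2n_{0}}$ in place of $P$ and gives, for every $k\geq 1$,
\[
D\left( P^{\ast 2n_{0}k}\right) =D\left( \left( P^{\ast 2n_{0}}\right) ^{\ast k}\right) \leq \left( 1-c\right) ^{k-1}D\left( P^{\ast 2n_{0}}\right) .
\]

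Finally I would interpolate between the terms of this geometrically decaying subsequence: for a general index $m$ write $m=2n_{0}k+r$ with $k=\lfloor m/(2n_{0})\rfloor $ and $0\leq r<2n_{0}$; monotonicity gives $D\left( P^{\ast m}\right) \leq D\left( P^{\ast 2n_{0}k}\right) $, and using $k\geq m/(2n_{0})-1$,
\[
D\left( P^{\ast m}\right) \leq \frac{D\left( P^{\ast 2n_{0}}\right) }{1-c}\left( \left( 1-c\right) ^{1/(2n_{0})}\right) ^{m},
\]
which is the desired exponential bound since $0<\left( 1-c\right) ^{1/(2n_{0})}<1$. The only substantive step is Proposition~\ref{1bit}, i.e. converting ``$D\left( P^{\ast n}\right) $ small'' into a pointwise positive lower bound on the density of some convolution power; this is precisely where finiteness of $D\left( P\right) $ is used (to reach $D\left( P^{\ast n_{0}}\right) <1$ bit via Corollary~\ref{divkonv}) and where the support hypothesis enters (it is needed for Corollary~\ref{divkonv}). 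Everything after that is the routine interpolation of a monotone sequence between a geometrically decaying subsequence, so I expect no further obstacle.
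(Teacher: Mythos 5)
Your proof is correct and is essentially the argument the paper intends: the paper's own ``proof'' is the single remark that the theorem follows by combining Theorem~\ref{konvergens}, Lemma~\ref{lower} and Proposition~\ref{1bit}, and you have supplied exactly the missing glue, namely the data-processing monotonicity of $n\mapsto D\left(P^{\ast n}\Vert U\right)$, the passage to $P^{\ast 2n_{0}}$ via Proposition~\ref{1bit}, and the interpolation along the geometrically decaying subsequence. The only blemish is the constant in your final display, which should be $D\left(P^{\ast 2n_{0}}\right)/\left(1-c\right)^{2}$ rather than $D\left(P^{\ast 2n_{0}}\right)/\left(1-c\right)$ (since $k-1\geq m/(2n_{0})-2$), but this is immaterial to the exponential rate.
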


As a corollary we get the following result that was first proved by Kloss 
\cite{Kloss1959} for total variation.

\begin{corollary}
Let $P$ be a probability measure on the compact group $G$ with Haar
probability measure $U.$ If the support of $P$ is not contained in any coset
of a proper subgroup of $G$ and $D\left( P\Vert U\right) $ is finite then $%
P^{\ast n}$ converges to $U$ in variation and the rate of convergence is
exponential.
\end{corollary}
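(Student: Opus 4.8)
The plan is to obtain this corollary immediately from the preceding theorem together with Pinsker's inequality. That theorem already tells us that $D\left(P^{\ast n}\Vert U\right)$ tends to zero exponentially; all that remains is to transfer this statement to the total variation distance $\left\Vert P^{\ast n}-U\right\Vert$.

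First I would make the exponential decay of $D\left(P^{\ast n}\right)$ quantitative by unwinding the argument behind the theorem. By Corollary \ref{divkonv} we have $D\left(P^{\ast n}\right)\to 0$, so there is an $N$ with $D\left(P^{\ast N}\right)<1$ bit. Proposition \ref{1bit}, applied to $P^{\ast N}$, then shows that $Q:=P^{\ast 2N}$ satisfies $dQ/dU\geq c$ for some $c>0$. The estimate in the proof of Lemma \ref{lower} gives $D\left(T\ast Q\right)\leq\left(1-c\right)D\left(T\right)$ for every $T\in M_{+}^{1}\left(G\right)$, and since convolution by $P$ is a Markov operator fixing $U$ we also have $D\left(P^{\ast r}\right)\leq D\left(P\right)$ for $r\geq 1$. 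Writing $n=2Nm+r$ with $m\geq 0$ and $1\leq r\leq 2N$ and iterating,
\begin{equation*}
D\left(P^{\ast n}\right)=D\left(P^{\ast r}\ast Q^{\ast m}\right)\leq\left(1-c\right)^{m}D\left(P^{\ast r}\right)\leq\frac{D\left(P\right)}{1-c}\left(1-c\right)^{n/\left(2N\right)},
\end{equation*}
which is the desired geometric bound with base $\left(1-c\right)^{1/\left(2N\right)}<1$.

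Next I would invoke Pinsker's inequality in the normalisation already used in the proof of Corollary \ref{dweakkonv}, namely $\left\Vert R-U\right\Vert\leq\left(2D\left(R\right)\right)^{1/2}$. Combining it with the bound above yields
\begin{equation*}
\left\Vert P^{\ast n}-U\right\Vert\leq\left(\frac{2D\left(P\right)}{1-c}\right)^{1/2}\left(1-c\right)^{n/\left(4N\right)},
\end{equation*}
so $P^{\ast n}\to U$ in variation with exponential rate, the base being $\left(1-c\right)^{1/\left(4N\right)}$.

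There is no real obstacle here: the corollary is a repackaging of the theorem and Pinsker's inequality. The only point deserving care is that the word ``exponential'' in the theorem must be upgraded to an explicit geometric bound for $D\left(P^{\ast n}\right)$ before one takes a square root, since the square root of a merely ``eventually small'' sequence need not decay geometrically; once the geometric bound is in hand, halving the exponent costs nothing. A secondary sanity check is that the normalisation of $\left\Vert\cdot\right\Vert$ is consistent with the form of Pinsker used earlier in the paper, which it is.
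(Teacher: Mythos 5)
Your proof is correct and follows the paper's route exactly: the paper's own proof is the single application of Pinsker's inequality $\tfrac{1}{2}\left\Vert P^{\ast n}-U\right\Vert ^{2}\leq D\left( P^{\ast n}\Vert U\right)$ to the exponential decay of $D\left( P^{\ast n}\Vert U\right)$ asserted in the preceding theorem. Your additional unwinding of that theorem into an explicit geometric bound (via Corollary \ref{divkonv}, Proposition \ref{1bit}, and the contraction estimate of Lemma \ref{lower}) is a sound and welcome elaboration of what the paper leaves implicit, but it is the same argument, not a different one.
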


\begin{proof}
This follows directly from Pinsker's inequality \cite{Csiszar67,
Fedotovetal03}%
\begin{equation*}
\frac{1}{2}\left\Vert P^{\ast n}-U\right\Vert ^{2}\leq D\left( P^{\ast
n}\Vert U\right) .\qedhere
\end{equation*}
\end{proof}

\begin{corollary}
Let $P$ be a probability measure on the compact group $G$ with Haar
probability measure $U.$ If the support of $P$ is not contained in any coset
of a proper subgroup of $G$ and $D\left( P\Vert U\right) $ is finite, then
the density%
\begin{equation*}
\frac{dP^{\ast n}}{dU}
\end{equation*}%
converges to 1 point wise almost surely for $n$ tending to infinity.
\end{corollary}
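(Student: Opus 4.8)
The plan is to promote the exponential convergence in information divergence, already established in Section~\ref{SecRateConv}, to pointwise convergence of the densities $U$-almost everywhere, by showing that the errors are summable in $n$. Throughout, ``almost surely'' is understood with respect to the Haar measure $U$, which is the only measure for which the densities $dP^{\ast n}/dU$ are defined; these densities exist since $D(P^{\ast n}\Vert U)$ is finite for every $n$ (which is part of the content of Corollary~\ref{divkonv}).

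Write $f_{n}=dP^{\ast n}/dU$. By the theorem asserting that the rate of convergence of $D(P^{\ast n}\Vert U)$ to zero is exponential, there are constants $C<\infty$ and $\rho\in[0,1)$ with $D(P^{\ast n}\Vert U)\le C\rho^{n}$ for all $n$. Pinsker's inequality, in the form used in the previous corollary, then yields
\[
\int_{G}\lvert f_{n}-1\rvert\,dU=\bigl\Vert P^{\ast n}-U\bigr\Vert\le\bigl(2D(P^{\ast n}\Vert U)\bigr)^{1/2}\le(2C)^{1/2}\,\rho^{\,n/2}.
\]
Since $\rho^{1/2}<1$, the right-hand side is summable in $n$, so $\sum_{n\ge1}\int_{G}\lvert f_{n}-1\rvert\,dU<\infty$.

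Next I would apply Tonelli's theorem to interchange sum and integral, obtaining $\int_{G}\sum_{n\ge1}\lvert f_{n}-1\rvert\,dU<\infty$; hence the nonnegative series $\sum_{n\ge1}\lvert f_{n}(g)-1\rvert$ converges for $U$-almost every $g$, and for each such $g$ its general term tends to $0$, i.e.\ $f_{n}(g)\to1$, which is the claim. (Equivalently, Markov's inequality gives $U\{\lvert f_{n}-1\rvert>\varepsilon\}\le\varepsilon^{-1}(2C)^{1/2}\rho^{\,n/2}$, which is summable for every $\varepsilon>0$, so the first Borel--Cantelli lemma shows that $\{\lvert f_{n}-1\rvert>\varepsilon\}$ occurs only finitely often off a $U$-null set.) The one point requiring care, and the natural pitfall, is that mere $L^{1}(U)$ convergence $f_{n}\to1$ --- which is all that Pinsker together with Corollary~\ref{divkonv} alone provides --- yields only a convergent subsequence; it is precisely the \emph{exponential} rate from Section~\ref{SecRateConv} that makes $\sum_{n}\Vert f_{n}-1\Vert_{L^{1}(U)}$ finite and thereby forces the whole sequence to converge almost everywhere. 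The remaining steps are routine measure theory.
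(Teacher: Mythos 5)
Your argument is correct and is essentially the paper's own proof: the paper likewise writes $\Vert P^{\ast n}-U\Vert=\int_G\left\vert \tfrac{dP^{\ast n}}{dU}-1\right\vert dU$, applies Markov's inequality, and invokes the exponential rate of total-variation convergence together with the Borel--Cantelli lemma, which is exactly your parenthetical variant. Your primary route via summability and Tonelli is just an equivalent packaging of the same idea, and your remark that the exponential rate (not mere $L^1$ convergence) is what drives the almost-everywhere conclusion correctly identifies the crux.
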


\begin{proof}
The variation norm can be written as%
\begin{equation*}
\left\Vert P^{\ast n}-U\right\Vert =\int_{G}\left\vert \frac{dP^{\ast n}}{dU}%
-1\right\vert ~dU.
\end{equation*}
Thus%
\begin{equation*}
U\left( \left\vert \frac{dP^{\ast n}}{dU}-1\right\vert \geq\varepsilon
\right) \leq\frac{\left\Vert P^{\ast n}-U\right\Vert }{\varepsilon}.
\end{equation*}
The result follows by the exponential rate of convergence of $P^{\ast n}$ to 
$U$ in total variation combined with the Borel-Cantelli Lemma.
\end{proof}

\section{Discussion}

In this paper we have assumed the existence of the Haar measure by referring
to the literature. With the Haar measure we have then proved convergence of
convolutions using Markov chain techniques. The Markov chain approach can
also be used to prove the existence of the Haar measure by simply referring
to the fact that a homogenous Markov chain on a compact set has an invariant
distribution. The problem about this approach is that the proof that a
Markov chain on a compact set has an invariant distribution is not easier
than the proof of the existence of the Haar measure and is less known.

We have shown that the Haar probability measure maximizes the rate
distortion function at any distortion level. The normal proofs of the
existence of the Haar measure use a kind of covering argument that is very
close to the techniques found in rate distortion technique. There is a
chance that one can get an information theoretic proof of the existence of
the Haar measure. It seems obvious to use concavity arguments as one would
do for Shannon entropy but, as proved by Ahlswede \cite{Ahlswede1990a}, the
rate distortion function at a given distortion level is not a concave
function of the underlying distribution, so some more refined technique is
needed. 

As noted in the introduction for any algebraic structure $A$ the group $%
Aut\left( A\right) $ can be considered as symmetry group, it it has a
compact subgroup for which the results of this paper applies. It would be
interesting to extend the information theoretic approach to the algebraic
object $A$ itself, but in general there is no known equivalent to the Haar
measure for other algebraic structures. Algebraic structures are used
extensively in channel coding theory and cryptography so although the theory
may become more involved extensions of the result presented in this paper
are definitely worthwhile.

\section*{ Acknowledgement}

The author want to thank Ioannis Kontoyiannis for stimulating discussions.

\bibliographystyle{mdpi}
\bibliography{database}


\end{document}